\titlespacing{\section}{0pt}{*1.0}{*1.0}
\titlespacing{\subsection}{0pt}{*1.0}{*1.0}
\titlespacing{\subsubsection}{0pt}{*1.0}{*1.0}
\newtheorem{lemma}{Lemma}[section]
\newtheorem{theorem}[lemma]{Theorem}
\newtheorem{definition}[lemma]{Definition}
\newtheorem{corollary}[lemma]{Corollary}
\newtheorem{proposition}[lemma]{Proposition}
\newenvironment{proofof}[1]{\smallskip\noindent{\bf Proof of #1}}%
        {\hspace*{\fill}$\Box$\par}
\newcommand{\eps}{\epsilon}
\newcommand{\etal}{et al.\xspace}
\newcommand{\cA}{{\cal A}}
\newcommand{\cP}{{\cal P}}
\newcommand{\cS}{{\cal S}}
\newcommand{\initOneLiners}{%
    \setlength{\itemsep}{0pt}
    \setlength{\parsep }{0pt}
    \setlength{\topsep }{0pt}
}
\newcounter{property}
\newcommand{\cC}{{\cal C}}
\newcommand{\cU}{{
\cal U}}
\begin{document}

%\title{New Approximation Algorithms for Non-Preemptive Real Time Scheduling}
%\title{New Approximation Algorithms for Non-Preemptive Throughput Maximization}
\title{A Conditional Lower Bound on Graph Connectivity\\ in MapReduce }

\author{
Sungjin Im\thanks{ Electrical Engineering and Computer Science, University of California. {\tt sim3@ucmerced.edu}. Supported in part by NSF grant CCF-1409130.} 
\and 
Benjamin Moseley\thanks{Tepper School of Business.  Carnegie Mellon Universiy. {\tt moseleyb@andrew.cmu.edu}.}
} 

\date{June 1, 2015}

\iffalse
\numberofauthors{2}
\author{
\alignauthor    
Sungjin Im \\
%\affaddr{Dept. of Electrical Engineering and Computer Science}\\
\affaddr{University of California - Merced}\\
\affaddr{Merced, CA 95343}\\
\email{sim3@ucmerced.edu}
\alignauthor
Benjamin Moseley \\
\affaddr{Washington University in St. Louis.}\\
\affaddr{St. Louis, MO 63130}\\
\email{bmoseley@wustl.edu}
}
\fi

%\date{}

\maketitle

\begin{abstract}
MapReduce (and its open source implementation Hadoop) has become the \emph{de facto} platform for processing large data sets. MapReduce offers a streamlined computational framework by interleaving sequential and parallel computation while hiding  underlying system issues from the programmer. Due to the popularity of MapReduce, there has been attempts in the theoretical computer science community to understand the power and limitations of the MapReduce framework.  In the most widely studied MapReduce models each machine has  memory sub-linear in the input size to the problem, hence cannot see the entire input. This restriction places many limitations on algorithms that can be developed for the model; however, the current understanding of these restrictions is still limited.

In this paper, our goal is to work towards understanding problems which do not admit efficient algorithms in the MapReduce model.  We study the basic question of determining if a graph is connected or not.  We concentrate on instances of this problem where an algorithm is to determine if a graph consists of a single cycle or two disconnected cycles.  In this problem, locally every part of graph is similar and the goal is to determine the global structure of the graph.   We consider a natural class of algorithms that can store/process/transfer the information only in the form of paths, and show that no randomized algorithm cannot answer the decision question in a sub-logarithmic number of rounds. Currently, there are no absolute super constant lower bounds on the number of rounds known for any problem in MapReduce. We introduce some of the first lower bounds for a natural graph problem, albeit for a restricted class of algorithms. We believe our result makes a progress towards understanding the limitations of MapReduce.
\end{abstract}

\thispagestyle{empty}
\newpage
\setcounter{page}{1}

\newcommand{\kv}[2]{\langle #1 ; #2 \rangle}
\section{Introduction}
Today as data sets grow larger, practitioners have turned to utilizing distributed computing to processes these data sets and MapReduce and its open source implementation Hadoop have emerged as a standard distributed computational framework.  This framework has been adopted by hundreds of companies and universities.   The main success of MapReduce is owed to a simple framework offered to programmers which hides underlying system issues, such as fault tolerance and network communication, and lets the programmer focus on algorithms design.

A MapReduce computation informally works as follows and more formal specifics are given in Section~\ref{sec:prelim}. There is a set of machines available for data processing and the data is assumed to be arbitrarily partitioned across the machines at the beginning of the computation. Computation is performance in rounds and each round consists of a Map phase and a Reduce phase.   During the Map phase, a machine decides for each piece of local data stored on the machine, which machine(s) the data should be sent to.  After the Map phase occurs, the data is routed to the machines as specified by the mappers where it is then processed in the Reduce phase.  In the Reduce phase, each machine runs an algorithm over the data it is assigned and possibly creates new data that is to be mapped in a subsequent map phase.  The key requirements are that each machine can only process data that locally resides on machine during the reduce phase and can only decide how to map the data to other machines based on local information.

Due to the popularity of MapReduce, the theoretical computer science community has focused on developing a theoretical foundation for the framework \cite{jacob2014complexity,ChierichettiDK14,KumarMVV13,LattanziMSV11,BahmaniMVKV12,EneIM11,ChierichettiKT10,BahmaniKV12,SuriV11,KarloffSV10,MirzasoleimanKSK13,BroderPJVV14,AndoniNOY14,FeldmanMSSS08,GoodrichSZ11,BeameKS13}.  Several prior works have designed theoretical models of MapReduce \cite{FeldmanMSSS08,KarloffSV10,GoodrichSZ11,GoelM12,AndoniNOY14} and some of these are special cases of more general parallel computing models \cite{CullerKPSSSSE93,Valiant90}.  The models  introduced for MapReduce are similar.  Currently, the most well studied model of MapReduce was given in \cite{KarloffSV10}.  The popularity of the model comes from its simplicity and the abundant examples where theoretical algorithms developed for the model translate into efficient practical algorithms. For example, \cite{BahmaniMVKV12,EneIM11,BahmaniKV12}.  It is interesting to note, that while the model of \cite{KarloffSV10} was designed for MapReduce, it captures central constraints seen in distributed computation in general. 

The model of \cite{KarloffSV10} is as follows.  When a problem has size $n$ then the algorithm designer is given $n^{1-\eps}$ machines with $n^{1-\eps}$ memory where $\eps >0$ is a fixed constant.  The constant $\eps$ can be chosen by the algorithm designer and, so, for an algorithm to belong to the model, it is just required to  use this many machines and memory for \emph{some} constant $\eps$.   The main idea behind this restriction  is that that since MapReduce is used to process large data sets, the number of machines and memory on the machines should be sub-linear in the size of the problem.   It is assumed that computation on each of the machines takes $O(1)$ time during a Map or Reduce phase (even if it actually runs a polynomial time algorithm).  The goal is to design an algorithm which minimizes the number of rounds.  By minimizing the number of rounds, the algorithm minimizes the number of times off site network communication occurs, which is typically the most time consuming part of a MapReduce computation.  We note that with these constraints, an algorithm which runs in  $O(\textrm{polylog} n)$ rounds will never see the entire input to the problem.  Thus, to solve a problem in MapReduce, one must find a highly parallelized algorithm.    An ideal algorithm runs in $O(1)$ rounds and many fundamental problems admit such an algorithm \cite{KumarMVV13,AndoniNOY14,LattanziMSV11}.   Refinements have been suggested for this model such as including the total network communication and restricting the number of machines so that the total aggregate memory available is at most $\tilde{\Theta}(n)$ \cite{AndoniNOY14,GoodrichSZ11}. 

One  major success of MapReduce in both theory and practice is in the analysis of graphs.  One of the first problems which was shown to have a theoretically efficient MapReduce algorithm is the Minimum Spanning Tree problem \cite{KarloffSV10} and later algorithms were given for computing maximal and approximate weighted matchings in MapReduce \cite{LattanziMSV11,KumarMVV13}.  These algorithms run in $O(1)$ rounds, but they unfortunately have an additional requirement on the input.  It is assumed in both of these works that if $m$ is the number of nodes in the graph then the number of edges in the graph is at least $m^{1+c}$ for some constant $c >0$.  When assuming this, the input size is then at least $n = m^{1+c}$ and by setting $\eps = c/2$ one can then assume in the MapReduce model that each machine has sufficient memory to store all of the nodes of the graph, but not all of the edges.

It has been observed in practice that there are cases where large graphs are dense \cite{LeskovecKF05}.  However, it is typically the case that large graphs are sparse.  Unfortunately, the analysis of sparse graphs in MapReduce is currently not well understood.  Techniques that run in $O(1)$ rounds for dense graphs, do not lend themselves to sparse graphs.  There are techniques for simulating PRAM algorithms in the MapReduce setting \cite{KarloffSV10,GoodrichSZ11}, but these results only lead to $\Omega( \log n)$ round algorithms for sparse graphs.  Without simulation of a PRAM algorithm, an explicit $O(\log n)$ round algorithm is known to compute $s,t$-connectivity in sparse graphs \cite{KarloffSV10}.   A central open problem in the MapReduce setting, is resolving whether or not there exists $o(\log n)$ round algorithms for graph problems in MapReduce.  

Due to the resistance to developing such algorithms, it has been suggested that there may be strong lower bounds on the number of MapReduce rounds required for computing even simple problems such as determining if a graph is connected or not.   Unfortunately, proving strong lower bounds for any problem is also a central open problem in the area.  Fundamentally, developing lower bounds will help to separate what types of problem can be efficient solved in MapReduce and which require different computation frameworks to solve when data sets are large.  As mentioned, in essence the theoretical MapReduce framework is an abstraction of understanding what types of problems can be solved in parallel when machines have sub-linear local memory. Such an abstraction will likely continue to be useful beyond MapReduce as it simply captures challenges faced in many forms of distributed computation in general.

The major challenge in showing a lower bound for a MapReduce algorithm is that related lower bound techniques from communication complexity are challenging to apply.  If one is to show a lower bound in MapReduce, one should show that no machine learns large amount of information in a round.  However, while intuitively this is true, it is difficult to formally argue because a super-linear amount of information is communicated between the machines every round.  Thus, it is hard to determine what the machines learn overall.

\medskip
\noindent \textbf{Previous Work:} There have been some attempts to develop lower bounds for MapReduce. The work of \cite{AfratiSSU2013}  shows some problems which cannot be solved in 1 round and \cite{fish2014computational} gives lower bounds but does not allow a machine to run a polynomial time computation during a round. 
Jacob \etal \cite{jacob2014complexity} have shown an interesting game between the algorithm and adversary of connecting two given points requires $\Omega(\log n)$ rounds for a certain class of algorithms. However, the connectivity information is revealed to the algorithm in a restrictive manner which is part of game rule, and it is not clear how the game can be translated into a natural graph problem.

Recently, a novel result in the work of \cite{BeameKS13}  \footnote{Also see \cite{BeameKS13arxiv} for the full version.} has shown that $s,t$-connectivity requires $\Omega(\log n)$ rounds for a class of algorithms, but assumes that the algorithm must output the path connecting $s$ and $t$. Such an assumption is natural in the context of databases (which was the main setting \cite{BeameKS13} considered) where a full discovered relationship between $s$ and $t$ must be output explicitly. Later, this result was nicely extended in \cite{BeameKS13arxiv} to the problem of labeling each point with the ID of the connected component the point belongs to.

%We note that this result strongly uses the fact that the entire $s,t$ path must be output and this model does not capture the MapReduce connectivity algorithm of \cite{KarloffSV10}. 
Hence it has remained open if there are super constant lower bounds even in restricted MapReduce  models for decision problems arising in graphs. 

\medskip
\noindent \textbf{Results:}  In this work we study lower bounds in the MapReduce setting.  We study the decision problem of determining if a given graph is connected or not.   To study this problem, we focus on the following instance problem instance.  The given graph $G= (V,E)$ is an undirected graph which consists of either one cycle on $|V|$ nodes or two cycles each on $|V|/2$ nodes.  The goal is for the algorithm to determine if the graph is one or two cycles, thus determining if it is connected or not.  Note that in either of these cases, the total size of the input is $O(|V|)$ and in the MapReduce model, this enforces that each machine has memory at most $O(|V|^{1-\eps})$ for some constant $\eps >0$.

The lower bound we show is for certain classes of algorithms and specifics are given in Section~\ref{sec:restrictions}.  Roughly, we assume that the algorithm is required to store all information in the form of subpaths of the cycle(s).  Anytime the algorithm has two paths on a machine which intersect (or touch) each other, the algorithm can merge these paths into a single path.  Upon merging these paths, we assume that no matter the length of the path being stored, the algorithm is allowed to use $O(1)$ space to store each path, including the whole information regarding the path (eg. from which subpaths the path was merged).  Then, the algorithm can map each path using only the  information of the path itself, but not other disjoint paths. The goal for the algorithm is to find a path longer than $|V|/2$, certifying that the graph does not consist of one cycle or show that no such path exists, certifying that the graph consists of two cycles.  Using these restrictions, we give the following theorem.

\begin{theorem}
\label{thm:main}
No randomized algorithms, which store information only in the form of paths, merge intersecting paths to obtain longer paths, and map each path based only on the path itself (see Section~\ref{sec:restrictions} for full details on the assumptions), can decide if a given graph of $n$ points consists of one cycle or two disconnected cycles in $o(\log n)$ rounds in MapReduce. 
\end{theorem}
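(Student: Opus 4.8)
The plan is to invoke Yao's minimax principle and exhibit a single distribution $D$ over instances such that every deterministic algorithm respecting the restrictions of Section~\ref{sec:restrictions} and running in $o(\log n)$ rounds errs on $D$ with probability at least $1/3$. I would let $D$ draw a uniformly random cyclic ordering $\sigma$ of $[n]$ together with a uniformly random index $i$, and then output with probability $1/2$ the single cycle given by $\sigma$ (the connected instance) and with probability $1/2$ the two cycles obtained by cutting $\sigma$ at positions $i$ and $i+n/2$ and reconnecting (the disconnected instance). The two halves thus differ in only four edges, located at one uniformly random position of a uniformly labelled cycle --- the source of the hardness.

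At every point in an execution the algorithm's global state is a partition of the input edges into contiguous arcs (``paths''), consecutive arcs sharing an endpoint, and the algorithm obtains a certificate of connectivity only once it holds an arc of length more than $n/2$. The technical core is a \emph{doubling lemma}: for a fixed deterministic algorithm and a sample from $D$, with probability $1-o(1)$ every arc present after $t \le T := o(\log n)$ rounds has length at most $C^t$, where $C = C(\eps) = O(1/\eps)$. I would prove it by induction on $t$. Given the lemma through round $t$, there are $k_t \ge n/C^t = n^{1-o(1)}$ arcs; the Map phase sends each arc to a machine as a function of that arc alone, and the Reduce phase merges each maximal run of consecutive arcs that land on a common machine. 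Since the memory bound forces every machine to receive at most $n^{1-\eps}$ of the $k_t \ge n^{1-o(1)}$ arcs --- a vanishing fraction --- a first-moment argument, carried out while revealing the ordering gradually because the round-$t$ arcs themselves depend on it, shows that with probability $1-n^{-\Omega(1)}$ no maximal monochromatic run has more than $C = O(1/\eps)$ consecutive arcs. Merging then multiplies the longest arc by at most $C$, which closes the induction; after $T = o(\log n)$ rounds all arcs have length $C^T = n^{o(1)} < n/2$.

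With the doubling lemma I would conclude by an indistinguishability step. Conditioned on the probability-$(1-o(1))$ event that every arc, and every sub-arc in any arc's merge history, stays shorter than $n/2$ for all $T$ rounds, the algorithm's entire final state is a collection of short labelled paths carrying only intra-arc information; such a collection extends both to a single cycle and to two cycles, so it cannot by itself decide connectivity. To make this quantitative I would couple the two halves of $D$ through the common pair $(\sigma,i)$: the executions on the connected and disconnected instances coincide except on the arcs that ever come within distance $C^T = n^{o(1)}$ of the four altered edges, and because those edges sit at a uniformly random, uniformly labelled location, the distribution of the algorithm's final state --- hence of its output bit --- is the same in the two cases up to total variation $o(1)$. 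A deterministic algorithm therefore answers identically on both halves of $D$ and errs with probability at least $\tfrac12(1-o(1)) \ge 1/3$; by Yao, so does every randomized algorithm on its worst-case input, which is Theorem~\ref{thm:main}.

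The step I expect to be the main obstacle is justifying, both in the inductive step of the doubling lemma and in the coupling, that the algorithm learns only a negligible, purely \emph{local} amount about how the arcs are glued into cycle(s): the first-moment estimate needs the arrangement of the round-$t$ arcs to remain close to uniform given everything the algorithm has seen, and the coupling needs the $n^{o(1)}$-sized region perturbed by the cut to be statistically invisible. I would handle both by a deferred-revelation argument that exposes $\sigma$, the index $i$, and the labels only as the algorithm touches them while maintaining that both a one-cycle and a two-cycle completion of the exposed fragments stays feasible; this is the point where all three restrictions --- store only paths, merge only touching paths, map each path obliviously from itself --- together with the sublinear memory bound must be used essentially. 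The remaining pieces --- the geometric tail bound on run lengths, the counting of short paths in sparse color classes, and the passage from deterministic to randomized algorithms --- are comparatively routine.
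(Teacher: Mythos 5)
Your overall architecture matches the paper's: apply Yao's principle, then prove by induction that under a random placement of the points on the cycle the longest path any machine can hold grows by only a constant factor $O(1/\eps)$ per round, so after $o(\log n)$ rounds no path reaches length $n/2$, and by the fourth restriction of Section~\ref{sec:restrictions} no decision can be made (this is exactly Lemma~\ref{lem:main} with growth rate $\rho = 1024/\eps$). Your extra coupling/indistinguishability step is actually unnecessary in this model: since the algorithm may only decide by exhibiting a path of length at least $n/2$, it suffices to show no such path appears on either type of instance, which is how the paper concludes.

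The gap is the step you yourself flag as the main obstacle, and it is not a routine one --- it is the entire technical content of the paper. Your first-moment bound in the inductive step treats the arrangement of the round-$t$ arcs as conditionally uniform given everything the algorithm has seen, and your coupling assumes the executions on the one-cycle and two-cycle instances coincide away from the cut. Both claims require proving that the algorithm's observable state (which subpaths sit on which machines) is determined by strictly less information than the full instance, even though every routing decision depends on the contents and history of the arcs, which were themselves produced under conditioning on earlier rounds. The paper makes this precise with the Local Invariance Lemma (Lemma~\ref{lem:invariant}): two instances sharing a path $H$ place the same subpaths of an exponentially trimmed core of $H$ on the same machines in every round, the $\rho^r/32$-per-side trimming being exactly the device that controls the information leaking at path boundaries. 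This is then combined with the hierarchical refinement of the probability space (the permutation tree of level-$r$ segments and arcs, and the ``veiled'' partitions), so that at each level there is fresh, genuinely uniform randomness --- how level-$(r-1)$ segments are arranged into level-$r$ arcs --- that is provably independent of the machine assignments of the relevant ``center'' subpaths; this is what licenses the counting in Lemma~\ref{lem:no-long-from-veiled}. Your proposed deferred-revelation argument names this goal but supplies neither the invariance statement nor the boundary-leakage control, and without them the conditioning on the algorithm's round-$t$ state can bias the adjacency structure of the arcs, invalidating both the first-moment estimate and the coupling. So the proposal is a plausible plan whose decisive step is missing rather than merely tedious.
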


As mentioned, this is the first lower bound for a graph decision problem in the MapReduce setting.  Our model captures most known algorithms for graph connectivity in MapReduce such as the connectivity algorithm of \cite{KarloffSV10}.  We further note that while we specifically consider the model of \cite{KarloffSV10} the further refinements of the model given in \cite{AndoniNOY14,GoodrichSZ11} are more restrictive and our lower bounds hold there as well. Our proof techniques do not use previous literature on lower bounds in alternative models, but rather we argue directly about the information that can be acquired by an algorithm.  A key difference between our work and that of \cite{BeameKS13} is that we allow learned information about paths to be compactly represented in $O(1)$ space, giving a large amount of power to the algorithm. This additional power justifies that the algorithm merges two paths if they intersect since it needs no extra memory, and makes the decision problem more clear. We feel that our assumptions give a considerably more flexible framework than \cite{BeameKS13} since most of key algorithmic developments in MapReduce require compact representation of data. For example graph filtering \cite{KarloffSV10,LattanziMSV11} and coresets for clustering \cite{KumarMVV13}. Indeed, being able to compactly represent large amounts of information is at the core of most MapReduce algorithms and capturing this in the lower bound is key to understanding the limits of this technique.

We were unable to determine a lower bound for all algorithms. However, we believe that our work is a fundamental step in determining unrestrictive lower bounds for problems in MapReduce and gives an explanation on why $O(1)$ round algorithms  for sparse graph analysis have been challenging to develop.

\medskip
\noindent \textbf{Overview of our Analysis:} Intuitively,  whether the graph consists of one cycle or two,  the graph looks similar to the algorithm, which only can use partial data on each machine for local Reduce computation. Connected components/paths observed on each machine look similar to the algorithm, making it hard to distinguish between both cases. Indeed, if points are arbitrarily stored across machines in the beginning (but each machine can store at most $n^{1- \eps}$ points), it is easy to see a random permutation of which nodes are connected induces paths of length at most $O(1)$ on all machines with high probability. While it is tempting to apply this observation over rounds to show a super constant lower bound on the number of rounds required for the algorithm to find a path of length $n/2$, the main technical challenges come from dependency between rounds -- we already used full randomness of permutations to fail the algorithm in the first round.  Now the algorithm has the freedom to place the points based on the revealed random decision in the next round and we now have to argue about what the algorithm might have learned in the next round \emph{without} changing which points are connected again.

To address these dependency issues, we make a crucial observation which we call the Local Invariance Lemma (Lemma~\ref{lem:invariant}). This lemma allows us to continue this type of analysis by recursively finding events conditioned on which we can enjoy sufficient randomness over a subset of instances consistent with what the algorithm has learned so far. At a high-level, the lemma states that if two different graphs (cycle labelings) have a long path $H$ in common on the cycle(s), then each subpath of $H$ sufficiently far from the boundaries (endpoints of the path) is present on the same subset of machines for many rounds, no matter which cycle (labeling) is input to the algorithm. 

To use the Invariance lemma, we refine the process of sampling a random permutation on a cycle into multiple levels where we create a collection of larger ordered sets of points in each level/round while preserving the order of points we fixed in the previous levels. 
This refined process can be viewed as a tree (see Figure~\ref{fig:perm-tree} for an illustration) where each leaf node is an instance. Here each ordered set is nothing but a subpath which will appear in each permutation of ordered sets in $S$; here each permutation corresponds to an instance. Let's call $S$ as a partition and let $\cP(S)$ denote all instances generated from $S$.  We show that in one additional round, the algorithm can increase the length of discovered paths only by a constant factor. Namely, conditioned on paths being of length at most $L$ in  some  partition $S$, the paths found in the next round from $S$ have length $O(L)$. This is where the Local Invariance Lemma kicks in. The lemma ensures that all ordered sets/paths in all instances in  $\cP(S)$ are small/short until the current round and allows us to identify 
a set of subpaths, each of which is present on the same set of machines for all instances in  $\cP(S)$.  In this sense, we can say the partition $S$ is \emph{veiled} (hidden) from the algorithm. Then,  we can use randomness over the permutations/instances in  $\cP(S)$ with respect to the identified subpaths in order to show that the algorithm can increase the length of learned paths by at most a constant factor with high probability. We now use the Invariance lemma again to show that each child (one more level refined partition) of $S$ is still veiled from the algorithm with high probability.
Our analysis shows that there are sufficiently many veiled  paths from the root to the leaf nodes in the tree, meaning that the algorithm  makes a small progress in every round in most cases.

\section{Preliminaries}
\label{sec:prelim}

It is assumed in MapReduce that data is represented at $\kv{key}{value}$ tuples.  Here the $key$ and $value$ are binary strings.  Intuitively, the $value$ is a piece of information and the $key$ is an address of a virtual machine.   The Map phase is defined as a function $f$ which takes a \emph{single} $\kv{key}{value}$ tuple and outputs a set of $\kv{key}{value}$ tuples.  The map function could be to be deterministic or randomized and is implemented by the algorithm designer.  The mapping operation is stateless and takes a single tuple so that the $\kv{key}{value}$ tuples can be mapped easily in a distributed fashion.  Intuitively, the map function sets the keys of the tuples to specify which machine the value should be sent to.  Note this the map function could possible create many tuples with the same value, thereby sending the value to several machines.  

After the map phase, a shuffle phase occurs which is done automatically by the system.  The shuffle phase routes the data with the same key to the same virtual machine.  The shuffle phase is typically the most time consuming portion of the round as this is when the massive data is routed through the network between the machines.

In the reduce phase, a reduce function $g$ takes as input all $\kv{key}{value}$ tuples with the same key $k$.  The reduce function is specified  by the algorithm designer and generally it is during the reduce phase where non-trivial computation is performed on the data. The reason non-trivial computation can be performed is because in this phase the algorithm designer knows the data it is operating over since the tuples mapped to the function was specified by the implemented map function. Like the map function, the reduce function can be deterministic or randomized. The output of the reduce function $g$ is a set of  $\kv{key}{value}$  pairs all with the same key. The output from the reduce phase is either more tuples that are to be mapped in a subsequent map phase or the final output. Parallelism is exploited in the reduce phase because tuples with different keys can be reduced by different machines.

Throughout this paper, we will assume the model of \cite{KarloffSV10} where we assume there are $n^{1-\eps}$ machines with $n^{1-\eps}$ memory where $\eps >0$ is a constant specified by the algorithm designer.  During a round, we assume that the computation performed by the map or reduce functions are allowed to be unbounded time functions (although the model requires they be polynomial time functions).  Our goal is to bound the number of rounds of communication required by the algorithm.

\section{Lower Bound}
\subsection{Restrictions on the Algorithms} 
\label{sec:restrictions}

There are $n+1$ points with unique IDs that are either on one cycle or two disconnected cycles, which is the question the algorithm is challenged to answer. There are $n^{1- \eps}$ machines, each with $n^{1 - \eps}$ units of memory where $0< \eps < 1$ is a fixed constant. A path $P$ is a consecutive sequence of points on the cycle(s) in clockwise order. The length of $P$ is defined as the number of vertices on $P$ and denoted as $|P|$. A path could be of length 1. All information is stored/processed/transferred in the form of paths. In particular, at any point in time, either before or after Reduce, the algorithm has a collection of subpaths of the cycle(s) on each machine.

We let $\cA^r_m(I)$ denote the collection of paths that the algorithm $\cA$ has on machine $m$ for instance $I$ at the beginning of the Reduce operation in round $r$. Since machines don't communicate with one another during Reduce, to emphasize machines being independent, we may say that machine $m$ has paths $\cA^r_m(I)$ for instance $I$ before Reduce in round $r$. When the instance $I$ is clear from the context, it can be omitted. Then, during Reduce each machine merges two intersecting subpaths repeatedly until all paths it has are pairwise non-intersecting/disjoint. Here we say that two paths are intersecting if the union of points on the two paths forms a path on a cycle. Otherwise, we say they are non-intersecting or disjoint. When we say two paths are properly intersecting, we mean that the two have a common point. Let $\cA^{r,+}_m$ denote the collection of \emph{disjoint} subpaths the algorithm $\cA$ has on machine $m$ \emph{after} Reduce in round $r$. Note that initially the $n+1$ points are stored across machines, so $\cA^1_m$ is a subset of points (or equivalently paths of length 1). A point may be present on multiple machines at the beginning.

%\vspace{-.3cm}
\begin{center}
\begin{tabular}[r]{|c|}
\hline
\\
\begin{minipage}{14.9cm}
\hspace{0.5cm} \textbf{Summary of the Algorithms' Restrictions} 
\begin{enumerate}
\setlength{\itemsep}{0pt}
\setlength{\parskip}{0pt}

\item Each machine can store its current information only in the form of a collection of paths.
\item Each machine can combine paths only when they are intersecting. 
\item Each machine can decide where to send a path $P$ it has based only on the path $P$, the current round number,  the machine index, and the history of the subpaths in $P$. 
\item The algorithm must have a path of length at least $n/2$ on a machine to decide  if the graph consists of a single cycle or not. 
\end{enumerate}
\end{minipage}\\\\
\hline
\end{tabular}
\end{center}

We assume that the algorithm can store each path using only \emph{one} unit of memory. Further, we allow a path $P$ to store all `local' history pertaining to itself -- not only all points on the path but also how the path was obtained, and even the history of its individual subpaths.  By history, we mean how the subpaths were obtained by the algorithm. However, the path $P$ carries no information on points on other disjoint paths. After obtaining disjoint paths, the algorithm (each machine $m$) decides where to send each path $P \in \cA^{r,+}_m$, based solely on the path's local information, the current round, and the machine index $m$. A path can be sent to multiple machines. Here the only constraint is the limited memory capacity of individual machines -- the algorithm must satisfy $|\cA^{r}_m| \leq n^{1-\eps}$ for all $r$ and $m$. Since all information is stored only in the form of paths, the only way for the algorithm to decide if the graph consists of a single cycle or not is to have a path of length at least $n/2$ on a machine.

\subsection{Analysis}
We fix a deterministic algorithm $\cA$ under the restrictions discussed in the previous section. Let $\rho := 1024 / \eps$ and $R:= (\eps /2) \log_{\rho} n$. For simplicity, we assume that $1 / \eps$ and $R$ are integers. We will show the following main lemma.

\begin{lemma}
\label{lem:main}
	Consider a random instance $I$ where $n+1$ points are randomly placed on a cycle. For all $1 \leq r \leq R$ and machines $m$, $\cA^{r,+}_m(I)$ has no path with length $2 \rho^r$ with high probability. 
\end{lemma}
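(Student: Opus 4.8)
The plan is to prove Lemma~\ref{lem:main} by induction on the round number $r$, using the Local Invariance Lemma (Lemma~\ref{lem:invariant}) to handle the between-round dependency. The base case $r=1$ is essentially the observation sketched in the overview: since the $n+1$ points are placed on the cycle(s) according to a uniformly random permutation and each machine holds at most $n^{1-\eps}$ of them at the start, the expected number of adjacent pairs (on the cycle) co-located on any fixed machine is $O(n^{1-2\eps})$, and a union bound over machines together with a concentration argument shows that with high probability every machine sees only paths of length $O(1) \le 2\rho$ after the first Reduce. The inductive step is where the real work lies: assuming that for every machine $m$ the collection $\cA^{r,+}_m(I)$ has no path of length $2\rho^r$ with high probability, I want to conclude the same with $r$ replaced by $r+1$ and the bound $2\rho^r$ replaced by $2\rho^{r+1}$, i.e. in one more round the algorithm can only multiply the maximum discovered path length by the constant factor $\rho$.

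To make the induction go through I would not reason about a single random instance directly, but about the refined sampling tree described in the overview: a uniformly random cycle labeling is generated level by level, where level $r$ fixes a partition $S$ of the points into ordered blocks (subpaths) that refines the partition at level $r-1$, and the leaves of the tree are the actual instances. For a partition $S$ at level $r$, let $\cP(S)$ be the set of instances consistent with $S$. The key structural claim, powered by the Invariance Lemma, is that if all paths appearing in all instances of $\cP(S)$ have length at most $L = 2\rho^r$ through round $r$, then $S$ is \emph{veiled}: there is a large family of subpaths (the parts of each block sufficiently far from the block boundaries) such that, for each such subpath $Q$ and each round $\le r$, the set of machines holding $Q$ is the \emph{same} for every instance in $\cP(S)$ — because the Reduce operations merge only intersecting paths and the Map decisions depend only on a path's own local history, and the Invariance Lemma guarantees these local histories coincide across $\cP(S)$. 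Since the algorithm's behavior on these subpaths is independent of which instance of $\cP(S)$ is the true one, I can now spend the remaining randomness — namely the random ordering of the blocks of $S$ into a cycle (equivalently, which children of $S$ in the tree get chosen) — to argue that in round $r+1$ any machine receiving a collection of these veiled subpaths will, with high probability over this fresh randomness, find them pairwise disjoint on the cycle except for mergers that increase the total length by at most a constant factor. A union bound over machines and over the (polynomially many) partitions $S$ at level $r$ then shows that with high probability \emph{every} child partition of \emph{every} level-$r$ partition is again veiled with max path length $\le 2\rho^{r+1}$, closing the induction; after $R = (\eps/2)\log_\rho n$ levels the max path length is still $2\rho^R = 2n^{\eps/2} < n/2$, so by restriction~4 the algorithm cannot have decided.

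The main obstacle, and the step I would spend the most care on, is the quantitative bookkeeping in the inductive step: I must ensure that the loss incurred each round is a genuine constant factor $\rho$ and not something that degrades (e.g. to $\rho^{O(r)}$ or to a bound that eats the high-probability slack) as the recursion deepens. Concretely this requires (i) choosing the block sizes in the level-$(r+1)$ refinement and the "boundary margin" in the Invariance Lemma so that the veiled subpaths are long enough to carry the argument yet the number of ''exposed'' boundary points stays a small fraction, so that merges through exposed regions contribute at most an $O(1)$ multiplicative factor; (ii) showing the relevant failure probabilities at level $r$ are small enough — say $n^{-\omega(1)}$, or at least $n^{-c}$ with $c$ large relative to $R$ and the number of partitions — that the union bound over all machines, all level-$r$ partitions, and all $R$ levels still yields a $1-o(1)$ success probability; and (iii) verifying that the Invariance Lemma's ''many rounds'' guarantee indeed covers all rounds up to $R$, given that $\rho = 1024/\eps$ was chosen precisely to leave room for these constants. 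Finally, passing from the deterministic-algorithm analysis to the randomized lower bound of Theorem~\ref{thm:main} is by the standard averaging (Yao-type) argument: a randomized algorithm succeeding in $o(\log n)$ rounds with good probability over a uniformly random instance would, by averaging over its coin tosses, yield a deterministic algorithm contradicting Lemma~\ref{lem:main}, since $2\rho^R \ge n/2$ would be forced.
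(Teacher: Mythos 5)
Your overall architecture is the same as the paper's: refine the uniform random instance into a level-by-level sampling tree, use the Local Invariance Lemma to argue that, conditioned on a block structure in which only short paths have been discovered, the interior subpaths of the blocks sit on the same machines for \emph{every} consistent instance (this is exactly the paper's notion of a veiled partition, Definition~\ref{def:veiled} and Lemma~\ref{corollary:veiled}), and then spend the remaining randomness (how the blocks are arranged into arcs) to show that one additional round can only multiply discovered path lengths by a constant factor. The genuine gap is in how you combine the failure probabilities across rounds. You propose ``a union bound over machines and over the (polynomially many) partitions $S$ at level $r$,'' concluding that with high probability \emph{every} child of \emph{every} level-$r$ partition is again veiled. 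But the number of level-$r$ partitions is $n!/(n/\rho^r)!$, which is super-exponential, not polynomial, while the per-machine, per-window failure probability one can actually prove is only inverse-polynomial (of order $n^{-4}$, as in Lemma~\ref{lem:no-long-from-veiled}); so that union bound cannot be carried out. Moreover, once you quantify over all partitions and all their children there is no randomness left, and the resulting deterministic statement is simply false: the tree always contains bad leaves (instances on which the algorithm gets lucky). The paper closes the induction differently, by controlling only the single random root-to-leaf path: for a veiled and ascendant-veiled $S \in \cS_{r-1}$ it shows $\Pr_{S' \sim \cC(S)}[\veiled_r(S')] \geq 1 - 1/n^2$ (Lemmas~\ref{lem:no-long-from-veiled} and~\ref{lem:veiled}, through Lemma~\ref{corollary:veiled}), and then deconditions, paying an additive $1/n^2$ per level over only $R = (\eps/2)\log_\rho n$ levels to obtain $\Pr_{S \sim \cS_r}[\veiled_r(S)] \geq 1 - 1/n$, which immediately gives the $2\rho^r$ bound since no path can fully contain a level-$r$ segment. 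Your plan needs to be recast in this conditional form; as written, the quantitative bookkeeping that you yourself flag as the crux does not close.

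A smaller point on the base case: the expected number of co-located adjacent pairs on a machine is $O(n^{1-2\eps})$, which is polynomially \emph{large} when $\eps < 1/2$, so an expectation-plus-concentration argument on adjacent pairs does not by itself give $O(1)$-length paths. What is needed (and what Lemma~\ref{lem:beginning} proves directly) is that the probability that some $\rho/32$ consecutive locations are all mapped to IDs held by a single machine is at most $(2/n^{\eps})^{\rho/32} \leq 1/n^4$, followed by a union bound over machines and windows; note also that the induction requires the stronger round-1 bound $\rho/32$, not merely $2\rho$, in order to satisfy the hypotheses of the veiled definition.
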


This lemma implies that no deterministic algorithm can find a path of length $n/2$ in $o(\log n)$ rounds regardless of whether the graph has one cycle or two, hence proving Theorem~\ref{thm:main} by Yao's Min-Max Principle.

\subsubsection{Local Invariance Lemma}
In this section we prove a very useful lemma which we call the Local Invariance Lemma. Roughly speaking, the lemma states that if two instances have a long common subpath $H$, then the algorithm behaves very similarly for both instances with respect to the common subpath, meaning that the algorithm has most of $H$'s subpaths on the same subsets of machines for many rounds. For a path $H$, we say $H'$ is $t$-trimmed subpath of $H$ if $H'$ is obtained by removing $t$ points from both ends of $H$. Recall that we say that two paths intersect if the union of the points on the two is a consecutive sequence of points on the cycle. For two paths $P'$ and $P$, let $P' \wedge P$ denote the longest common subpath of $P'$ and $P$. For a set of paths $T$ and a path $P$, let $T \wedge P := \{P' \wedge P \; | \; P' \in T\}$. Also we let $T | P$ denote the subset of paths in $T$ that are subpaths of $P$. Note that $(T | P) \wedge P = T | P$ and $(T \wedge P) | P = T \wedge P$. 

\begin{lemma}[Local Invariance Lemma]
\label{lem:invariant}
Let $I_1$ and $I_2$ be instances that have an identical path $H$. Let $H_r$ denote the $\rho^r / 32$-trimmed subpath of $H$. Suppose that for all $1 \leq r \leq R$ and machines $m$, all paths in $\cA^{r,+}_m(I_1) | H_r$ are of length at most $\rho^r / 8$, and all paths $(\cA^{r,+}_m(I_1) \setminus ( \cA^{r,+}_m(I_1) | H_r))  \wedge H_r$ are of length at most $\rho^r / 32$. Then it follows that 

\begin{itemize}
\setlength{\itemsep}{0pt}
\setlength{\parskip}{0pt}
\item Claim (i): all paths in $\cA^{r,+}_m(I_2) | H_r$ are of length at most $\rho^r / 8$, and all paths $(\cA^{r,+}_m(I_2) \setminus ( \cA^{r,+}_m(I_2) | H_r))  \wedge H_r$ are of length at most $\rho^r / 32$.; and 
\item Claim (ii): $\cA^{r,+}_m(I_1) | H_r = \cA^{r,+}_m(I_2) | H_r$. 
\end{itemize}
\end{lemma}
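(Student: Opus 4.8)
The plan is to prove Claim~(i) and Claim~(ii) together by induction on $r$, carrying both as a joint invariant, and the single idea driving everything is that the trim amount $\rho^r/32$ \emph{grows} geometrically with $r$: the annulus between the trimmed endpoints of $H_r$ and of $H_{r+1}$ has width $(\rho-1)\rho^r/32$, which — since $\rho=1024/\eps$ is large — is far bigger than the ``boundary poke'' bound $\rho^r/32$ of round $r$. Hence the only information on which $I_1$ and $I_2$ can disagree, namely how the two endpoints of $H$ attach to the rest of the cycle(s), cannot travel across one such annulus in a single round, and so never reaches $H_{r+1}$. Throughout, ``path'' is read as ``path together with its full subpath history'', and I will use that, because $H$ is literally the same consecutive block of points in $I_1$ and $I_2$, two subpaths of $H$ intersect under $I_1$ iff they intersect under $I_2$; so the intersection relation restricted to subpaths of $H$ is instance-independent.

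For the inductive step from round $r$ to round $r+1$, I would split the round into its Map half and its Reduce half. \textbf{Map:} the algorithm routes a path using only that path (with its history), the round, and the machine index; by Claim~(ii) at round $r$ the interior path-objects agree on every machine, and a path that is a subpath of $H_r$ \emph{after} being mapped was already a subpath of $H_r$ \emph{before}, hence arose from an interior path-object. So restricting the pre-Reduce collections of round $r+1$ to subpaths of $H_r$ — a fortiori, to subpaths of $H_{r+1}\subseteq H_r$ — gives the same collection on every machine under $I_1$ and $I_2$. \textbf{Reduce:} on a machine before Reduce in round $r+1$, call a path \emph{interior} if it is a subpath of $H_r$ and \emph{boundary} otherwise; by Claim~(i)/the hypothesis at round $r$, a boundary path that meets $H_r$ does so in a subpath of length $\le\rho^r/32$ anchored at an endpoint of $H_r$, hence it stays within $\rho^r/16$ of an endpoint of $H$ and is therefore disjoint from $H_{r+1}$ (here $\rho^{r+1}/32\gg\rho^r/16$). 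A maximal post-Reduce path is the union of a \emph{cluster} — a maximal family of pre-Reduce paths whose union is a single subpath. If a cluster's union lies in $H_{r+1}$, every member is interior, so (using that interior path-objects and their intersection pattern agree across the two instances, and that no boundary path can touch a subpath of $H_{r+1}$) it is also a cluster under the other instance, which gives Claim~(ii) at round $r+1$. Claim~(i) at round $r+1$ then follows: the length bound on $\cA^{r+1,+}_m(I_2)|H_{r+1}$ is inherited from $I_1$ via Claim~(ii); and for a post-Reduce path $P$ not contained in $H_{r+1}$ but meeting it, $P\wedge H_{r+1}$ is contributed only by $P$'s interior members (its boundary members being too far from $H_{r+1}$), so it equals the matching quantity for the corresponding path $P'$ under $I_1$ — which is likewise not contained in $H_{r+1}$, again because boundary paths avoid $H_{r+1}$ — and the hypothesis at round $r+1$ bounds it by $\rho^{r+1}/32$. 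The base case $r=1$ is the same Reduce argument with trim amount $0$: before Reduce in round $1$ every path is a single point, the initial placement is common to $I_1$ and $I_2$, and adjacency inside $H$ is common.

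I expect the crux to be the Reduce half, specifically the assertion that the portion of a post-Reduce maximal path lying in $H_{r+1}$ is assembled only from common interior pieces. The genuine danger is that a long chain of interior paths reaches from $H_{r+1}$ out to an endpoint of $H_r$ and there latches onto a boundary path present under $I_1$ but not under $I_2$ (or onto a different one), producing honestly different maximal paths in the two runs; one must show that this discrepancy is confined to within $\rho^r/16$ of an endpoint of $H$ and is therefore invisible on $H_{r+1}$, and that no boundary path can sit adjacent to anything inside $H_{r+1}$. Pinning this down amounts to showing that the intersection graph, restricted to the paths meeting $H_{r+1}$, is identical under $I_1$ and $I_2$ — this is where the geometric separation $\rho^{r+1}/32\gg\rho^r/16$ is used, and it is entirely combinatorial given the hypothesis (no randomness or structure of the cycle(s) outside the shared block $H$ enters at all). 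A minor but necessary bookkeeping point is that equality of path-objects must be maintained \emph{with} histories, so that both Map (which consults histories when routing) and Reduce (which records merges in the history) preserve it.
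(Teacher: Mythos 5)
Your proof is correct and follows essentially the same route as the paper's: an induction over rounds in which locality of the Map step transfers the post-Reduce agreement on the previous trim level to the pre-Reduce collections, and the geometric growth of the trim margin (the gap between $H_r$ and $H_{r+1}$ dwarfing the $\rho^r/32$ boundary overlap) prevents any path not contained in $H_r$ from touching $H_{r+1}$. Your cluster formulation of the Reduce step is just a repackaging of the paper's contradiction argument (and your explicit bookkeeping of histories is, if anything, slightly more careful than the paper's).
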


Note that in the lemma, $(\cA^{r,+}_m(I_1) \setminus ( \cA^{r,+}_m(I_1) | H_r))$ refers to the paths in $\cA^{r,+}_m(I_1)$ that are not fully contained in $H_r$. 
Hence, what all paths $(\cA^{r,+}_m(I_1) \setminus ( \cA^{r,+}_m(I_1) | H_r))  \wedge H_r$ being of length at most $\rho^r / 32$ means  is that all paths in $\cA^{r,+}_m(I_1)$ that are not fully contained in $H_r$ intersect $H_r$ by at most $\rho^r / 32$. The first part of Claim (i) is redundant since it follows from Claim (ii). However, we keep it explicit since it will be useful later for analysis. 

Before we prove the lemma we explain what it means in other words and how we can exploit it. We will show that the length of paths machines have can grow only by a constant factor in each round as claimed in Lemma~\ref{lem:main}. Intuitively, most subpaths of $H$ will move to the same machines for many rounds due to the `locality' assumptions imposed on the algorithm. However, the similarity will diminish in rounds since `boundary' subpaths of $H$ can be mapped to different machines affected by other paths that are not fully contained in $H$. This is why we trim out $H$ by an exponentially increasing margin on both ends, which is formalized in $H_r$. The first claim says that once we restrict $H$ to $H_r$, all subpaths the algorithm has found do not overlap with $H_r$ a lot. The second claim states that all subpaths fully contained in the trimmed subpath $H_r$ are stored on the same sets of machines for both instances $I_1$ and $I_2$. This will allow us to identify a large number of subpaths shared by many instances that the algorithm has to connect to obtain longer paths. By defining the probability subspace on the instances, we will be able to repeatedly apply our probabilistic argument over multiple rounds.

We now prove Lemma~\ref{lem:invariant} by induction on the number of rounds. Recall that $\cA^{r,+}_m(I)$ is the paths the machine $m$ has for instance $I$ after Reduce in round $r$.   

\begin{proofof}[Lemma~\ref{lem:invariant}]
We start with proving the base case. Recall that each point on $H$ is on the same subset of machines at the beginning, i.e. $\cA^1_m(I_1) \wedge H= \cA^1_m(I_2) \wedge H$ for all $m$; recall that $\cA^1_m$ only has paths of a unit length (or equivalently points). Since each machine $m$ only can merge paths on itself, for any $P^* \in \cA^{1,+}_m(I_2) \setminus (\cA^{1,+}_m(I_2)  | H_1) \wedge H$, all points (each point is a path of length 1) on $P^*$ are in $\cA^{1}_m(I_2) \wedge H= \cA^{1}_m(I_1) \wedge H$. Further,  $P^*$ is not a subpath of $H_1$. Hence $m$ can make a path $P$ that is not a subpath of $H_1$ in round 1 for $I_1$ such that $P \wedge H_1$ is as long as $P^* \wedge H_1$. This immediately proves the second part of Claim (i). As mentioned before, the first part of Claim (i) will follow from Claim (ii).

To see Claim (ii), consider any path $P$ in $\cA^{1,+}_m(I_1) | H_1$. Note that all points on $P$ are in $\cA^1_m(I_1)$, but neither of the two points adjacent to $P$; otherwise $\cA^1_m(I_1)$ will contain a path longer than $P$, but not $P$.  This is the same for instance $I_2$ since $\cA^1_m(I_1) \wedge H= \cA^1_m(I_2) \wedge H$. Hence $P \in \cA^{1,+}_m(I_2)$, and since $P$ is a subpath of $H_1$, we have that $\cA^{r,+}_m(I_1) | H_r \subseteq \cA^{r,+}_m(I_2) | H_r$. Similalry, we can show $\cA^{r,+}_m(I_2) | H_r \subseteq \cA^{r,+}_m(I_1) | H_r$, hence we have Claim (ii). (We note that trimming $H$ is crucial: it may happen that $\cA^{r,+}_m(I_1) | H \neq \cA^{r,+}_m(I_2) | H$.)

\smallskip
We prove the lemma holds for round $r$ assuming that it does for round $r -1$. Since each machine $m$ sends out each path in $\cA^{r-1,+}_m$ to machines as it is for round $r$, Claims for round $r-1$ imply:

\begin{proposition}
\label{prop:before-reduce-1}
All paths in $\cA^{r}_m(I_1) \wedge H_{r-1}$ or $\cA^{r}_m(I_2) \wedge H_{r-1}$ are of length at most $ \rho^{r-1} / 8$.
\end{proposition}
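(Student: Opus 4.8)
The plan for Proposition~\ref{prop:before-reduce-1} is to derive it as an immediate bookkeeping consequence of the round-$(r-1)$ statements, using only the fact that the Map phase forwards paths without modifying them. First I would recall that $\cA^{r}_m(I_j)$ (for $j \in \{1,2\}$) is precisely the collection of paths delivered to machine $m$ at the beginning of round $r$, and that under the restrictions on $\cA$ each such path is a copy of some path $P \in \cA^{r-1,+}_{m'}(I_j)$ for some machine $m'$: merging happens only during Reduce, and although a path may be routed to several machines, its point set — hence its length and its intersection with the fixed path $H_{r-1}$ — is unchanged in transit (this is the remark that each machine sends out the paths of $\cA^{r-1,+}_m$ ``as is''). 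So it suffices to bound $|P \wedge H_{r-1}|$ for an arbitrary $P \in \cA^{r-1,+}_{m'}(I_j)$.

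Next I would split into two cases according to whether $P$ is fully contained in $H_{r-1}$. If $P \in \cA^{r-1,+}_{m'}(I_j) \, | \, H_{r-1}$, then $P \wedge H_{r-1} = P$; and the first part of the round-$(r-1)$ statement — which for $j = 1$ is part of the hypothesis of Lemma~\ref{lem:invariant} and for $j = 2$ is Claim~(i) at round $r-1$, available from the induction on rounds — gives $|P| \le \rho^{r-1}/8$. Otherwise $P \in \cA^{r-1,+}_{m'}(I_j) \setminus (\cA^{r-1,+}_{m'}(I_j) \, | \, H_{r-1})$, so the second part of the same round-$(r-1)$ statement gives $|P \wedge H_{r-1}| \le \rho^{r-1}/32 \le \rho^{r-1}/8$. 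In both cases $|P \wedge H_{r-1}| \le \rho^{r-1}/8$, and since every path in $\cA^{r}_m(I_j) \wedge H_{r-1}$ has the form $P \wedge H_{r-1}$ with $P \in \cA^{r}_m(I_j)$, the proposition follows for $j = 1$ and $j = 2$ simultaneously.

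I do not anticipate any real obstacle here — this is a preparatory step, not the crux of the induction. The only thing to be careful about is to phrase everything in terms of $\cA^{r}_m$, the pre-Reduce collection, rather than $\cA^{r,+}_m$, so that the round-$(r-1)$ bounds may be invoked before any new merging has occurred; one should also note that the second case genuinely uses the hypothesis that paths not contained in $H_{r-1}$ overlap it by at most $\rho^{r-1}/32$, which in particular rules out $H_{r-1}$ being a proper subpath of some received $P$. The delicate part of the induction step — tracking how Reduce in round $r$ combines these short pieces, and arguing it produces the same outcome on $I_1$ and $I_2$ along $H_r$ — comes afterwards and is untouched by this proposition.
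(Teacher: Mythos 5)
Your proposal is correct and matches the paper's reasoning: the paper dispatches this proposition in one line by observing that each path in $\cA^{r}_m(I_j)$ is forwarded unchanged from some $\cA^{r-1,+}_{m'}(I_j)$, so the round-$(r-1)$ bounds (the hypothesis of Lemma~\ref{lem:invariant} for $I_1$ and Claim (i) at round $r-1$ for $I_2$) carry over, which is exactly your two-case argument spelled out. No gaps; the distinction you draw between the pre-Reduce collection $\cA^{r}_m$ and the post-Reduce $\cA^{r,+}_m$ is precisely the point the paper relies on.
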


\begin{proposition}
\label{prop:before-reduce-2}
For all $m$, $\cA^{r}_m(I_1) | H_{r-1}= \cA^{r}_m(I_2) | H_{r-1}$. 
\end{proposition}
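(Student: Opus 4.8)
\noindent\textbf{Proof plan (Propositions~\ref{prop:before-reduce-1} and \ref{prop:before-reduce-2}).}
The plan is to read both propositions off the round-$(r-1)$ statements with essentially no new work, exploiting only the locality of the Map step (the third restriction on the algorithm). Recall that $\cA^r_m(I)$ is exactly the collection of paths delivered to machine $m$ by the shuffle of round $r$, and that the Map step neither alters nor creates paths: every $P \in \cA^r_m(I)$ is some path held by a machine $m'$ after its round-$(r-1)$ Reduce, i.e.\ $P \in \cA^{r-1,+}_{m'}(I)$, that was routed to $m$; and the routing decision for $P$ at $m'$ is a function only of $P$ together with the local history stored on it, the round index, and $m'$ --- it does not see the other paths on $m'$ and it does not see the instance. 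So the entire task is to check that, for a path that currently lies inside the trimmed window $H_{r-1}$, all inputs to the routing function agree for $I_1$ and $I_2$.

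First I would fix $m$ and a path $Q \in \cA^r_m(I_1)$ with $Q \subseteq H_{r-1}$, and pick a machine $m'$ with $Q \in \cA^{r-1,+}_{m'}(I_1)$ whose Map step routed $Q$ to $m$. Since $Q \subseteq H_{r-1}$ we have $Q \in \cA^{r-1,+}_{m'}(I_1)\,|\,H_{r-1}$, so the inductive hypothesis $\cA^{r-1,+}_{m'}(I_1)\,|\,H_{r-1} = \cA^{r-1,+}_{m'}(I_2)\,|\,H_{r-1}$ places the very same path $Q$ --- same point sequence, because it sits on the common subpath $H$, and same attached history, because in our model a path is the pair (point sequence, history) and Claim (ii) is an equality of such pairs --- into $\cA^{r-1,+}_{m'}(I_2)$. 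The routing function applied to $Q$ at $m'$ therefore sees identical inputs and again ships $Q$ to $m$, so $Q \in \cA^r_m(I_2)\,|\,H_{r-1}$; swapping $I_1$ and $I_2$ gives the reverse inclusion, which is Proposition~\ref{prop:before-reduce-2}. Proposition~\ref{prop:before-reduce-1} falls out of the same decomposition: every path in $\cA^r_m(I_j)$ equals some $P \in \cA^{r-1,+}_{m'}(I_j)$, and applying the lemma's length hypothesis for $I_1$ (respectively Claim (i) for $I_2$) at round $r-1$ gives $|P \wedge H_{r-1}| \le \rho^{r-1}/8$ in every case, since $P$ is either a subpath of $H_{r-1}$ (hence of length $\le \rho^{r-1}/8$) or meets $H_{r-1}$ in at most $\rho^{r-1}/32$ points.

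The point that I expect to need the most care is making the phrase ``the routing decision is instance-independent'' genuinely airtight: one has to be sure that the local history the algorithm is allowed to carry along $Q$ is determined only by the sub-path structure \emph{inside} $H$, where $I_1$ and $I_2$ literally coincide, with nothing ``leaking in'' from the rest of the cycle(s). This is precisely what Claim (ii) for all earlier rounds delivers --- it certifies that the whole formation history of any path presently resident in $H_{r-1}$ has been instance-independent, so the Map step cannot distinguish $I_1$ from $I_2$ on such a path. Granting that, Propositions~\ref{prop:before-reduce-1} and \ref{prop:before-reduce-2} are routine set-chasing; they constitute the ``before-Reduce'' half of the round-$r$ induction step, which one then combines with the round-$r$ merging of intersecting paths to recover Claims (i) and (ii) and close the induction.
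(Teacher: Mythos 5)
Your proposal is correct and follows essentially the same route as the paper: the paper derives Propositions~\ref{prop:before-reduce-1} and \ref{prop:before-reduce-2} in one line from the round-$(r-1)$ Claims plus the fact that each machine ships each path in $\cA^{r-1,+}_m$ unchanged, based only on the path itself (with its local history), the round, and the machine index. Your write-up merely makes explicit the path-tracing and the point that Claim (ii) at earlier rounds guarantees identical histories, hence identical routing, which is exactly the paper's implicit argument.
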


We begin with proving the second part of Claim (i) for round $r$. For the sake of contradiction, suppose $\cA^{r,+}_m(I_2) \setminus ( \cA^{r,+}_m(I_2) | H_r )$ has a path $P^*$ such that $P^* \wedge H_r$ has length greater than $\rho^r / 32 $. Let $P'$ be the subpath of $P^*$ that also fully contains $P^* \wedge H_r$ and has length one more than $P^* \wedge H_r$. 
If there exist $P_1, P_2, ..., P_k$ in $\cA^r_m(I_2)$ that are all subpaths of $H_{r-1}$ such that $P' \wedge H_r$ is a subpath of the concatenation of $P_1, P_2, ..., P_k$, then by Proposition~\ref{prop:before-reduce-2}, it must be that $P_1, P_2, ..., P_k \in \cA^r_m(I_1)$. Hence $\cA^{r,+}_m(I_1) | H_r$ must have a path as long as $P^* \wedge H_r$ that is not a subpath of $H_r$, which is a contradiction to the conditions of Lemma~\ref{lem:invariant}. Otherwise, there must exist $P_0$ in $\cA^r_m(I_2)$ including a point not in $H_{r-1}$ and  intersecting $H_{r}$, and other paths $P_1, P_2, ..., P_k$ in $\cA^r_m(I_2)$ that are all subpaths of $H_{r}$ such that $P' \wedge H_r$ is a subpath of the concatenation of $P_0, P_1, P_2, ..., P_k$. Thus, $P_0 \wedge H_{r-1}$ has length at least $(|H_{r-1}| - |H_r|) / 2 =  \rho^r / 32  -  \rho^{r-1} / 32  >  \rho^r/ 16$, which is a contradiction to Proposition~\ref{prop:before-reduce-1}, which restricts the length of $P_0 \wedge H_{r-1}$ to at most $\rho^{r-1} / 8$. As before, the first part of Claim (i) will follow from Claim (ii).

We now prove Claim (ii) for round $r$. For the sake of contradiction, suppose there is a path $P^* \in (\cA_m^{r,+}(I_1) | H_r) \setminus (\cA_m^{r,+}(I_2) | H_r)$; the other case can be  handled symmetrically. Say machine $m$ obtained $P^*$ by combining $P_1, P_2, ..., P_k$ in $\cA^{r}_m(I_1)$ in round $r$. Since $H_r$ is a subpath of $H_{r-1}$, by Proposition~\ref{prop:before-reduce-2}, $P_1, P_2, ..., P_k \in \cA^{r}_m(I_2)$. Now the only possible reason $P^* \notin \cA^{r,+}(I_2) | H_r$ is that there is a path $P_0 \in \cA^r_m(I_2)$ that intersects $P^*$ (thus intersecting $H_r$) so that machine $m$ obtains a longer path for $I_2$ than $P^*$ for $I_1$ in round $r$. If $P_0$ is a subpath of $H_{r-1}$, then $P_0 \in \cA_m^r(I_1)$ by Proposition~\ref{prop:before-reduce-2}. Hence $m$ obtains a longer path than $P^*$ for $I_1$, so $P^* \notin \cA_m^{r,+}(I_1)$, which is a contradiction. Thus $P_0$ must include a point not in $H_{r-1}$ and intersect $H_r$, which yields $|P_0 \wedge H_{r-1}| \geq (|H_{r-1}| - |H_r|) / 2 >  \rho^r/ 16$, which is a contradiction to Proposition~\ref{prop:before-reduce-1} as we argued above. 
This completes the proof of Lemma~\ref{lem:invariant}.
\end{proofof}

\subsubsection{Constructing a Random Lower Bound Instance}
In this section we construct our lower bound instance and prove Lemma~\ref{lem:main} and Theorem~\ref{thm:main}. As stated in Lemma~\ref{lem:main} our random instance is generated by placing $n+1$ points on a cycle in random order. While the instance itself is simple, our analysis is non-trivial and repeatedly uses Lemma~\ref{lem:invariant}. Towards this end, we will have to define the probability space carefully. 

The $n+1$ points are indexed by their IDs, $0, 1, ..., n$. We fix the $n+1$ locations where points can be placed and index them by $Q = \{q_0, q_1, q_2, ..., q_n\}$  in clockwise order. Here, $q_0$ is the `pivot', and is the next to $q_n$ on the cycle. For simplicity, assume that $n$ is a power of $\rho:= 1024 / \eps$. For the sake of analysis, we give an alternative but equivalent definition of a random permutation of the points on the cycle. In the probability space, an outcome can be described by a one-to-one mapping $f$ from point IDs $\{0\} \cup [n]$ to point locations $Q$ with the invariant $f(0) = q_0$ regarding the pivot point. We denote the entire outcome space as $\cU$. Then each possible $f$ corresponds to a unique permutation of $[n]$, so it follows that $|\cU| = n!$. 

Since we will repeatedly apply Lemma~\ref{lem:invariant} by conditioning on certain events, we gradually refine the outcome space in multiple levels. We enumerate all permutations of $[n]$ using the following rooted tree $\tau$. Partition $[n]$ into $n/ \rho$ \emph{disjoint} ordered sets of equal size $\rho$, which we call level-1 \emph{segments}. The family of such partitions, which we call level-1 partitions, is denoted as $\cS_1$. Level-1 partitions are the children of the root. Note that there are $|\cS_1| = n! / (n / \rho)!$ level-1 partitions. 
In general, for each level-$(r-1)$ partition $S$ in $\cS_{r-1}$, we partition level-$(r-1)$-segments in $S$ into $n / \rho^{r}$ ordered sets of equal size, which we call level-$r$ segments. 
The family of such partitions is denoted as $\cC_{r}(S)$ since they are all derived from a specific level-$(r-1)$ partition, $S$. We recurse until $r = \log_{\rho} n$. When $r = \log_{\rho} n$, each level-$r$ partition in $\cS_r$ consists of a single segment of length $n$, which is exactly a permutation of point IDs, $[n]$.  Hence we can get a uniform distribution over permutations of $[n]$ by sampling $S_1$ from $\cS_1$ uniformly at random, and sampling $S_2$ from $\cC(S_1)$, and so on. See Figure~\ref{fig:perm-tree} for an illustration of the tree $\tau$. 
Let $\cU_r(S)$ denote the outcome subspace where all outcomes in $\cU_r(S)$ are descendants of $S$ in the tree $\tau$. To save notation, we allow $\cU_r(S)$ to refer to the uniform distribution over the outcome subspace $\cU_r(S)$. Likewise we can view $\cS_r$ as a probability distribution. Define a level-$r$ arc to be a sequence of $\rho^r$ locations starting with $q_{1+ (k-1) \rho^r}$ and ending with $q_{k \rho^r}$ for an integer $k$. Note that for any level-$r$ partition $S \in \cS_r$ and for all $I \in \cU_r(S)$, every level-$r$ segment in $S$ is mapped to a unique level-$r$ arc. See Figure~\ref{fig:arc}.

%For notational consistency, let $\cS_0$ be the singleton set where the unique element is the set of all level-0 segments, i.e. $[n]$.

\newcommand{\lt}[1]{\langle #1 \rangle}
\begin{figure}[tp]
\begin{center}
\includegraphics[width=.6\textwidth]{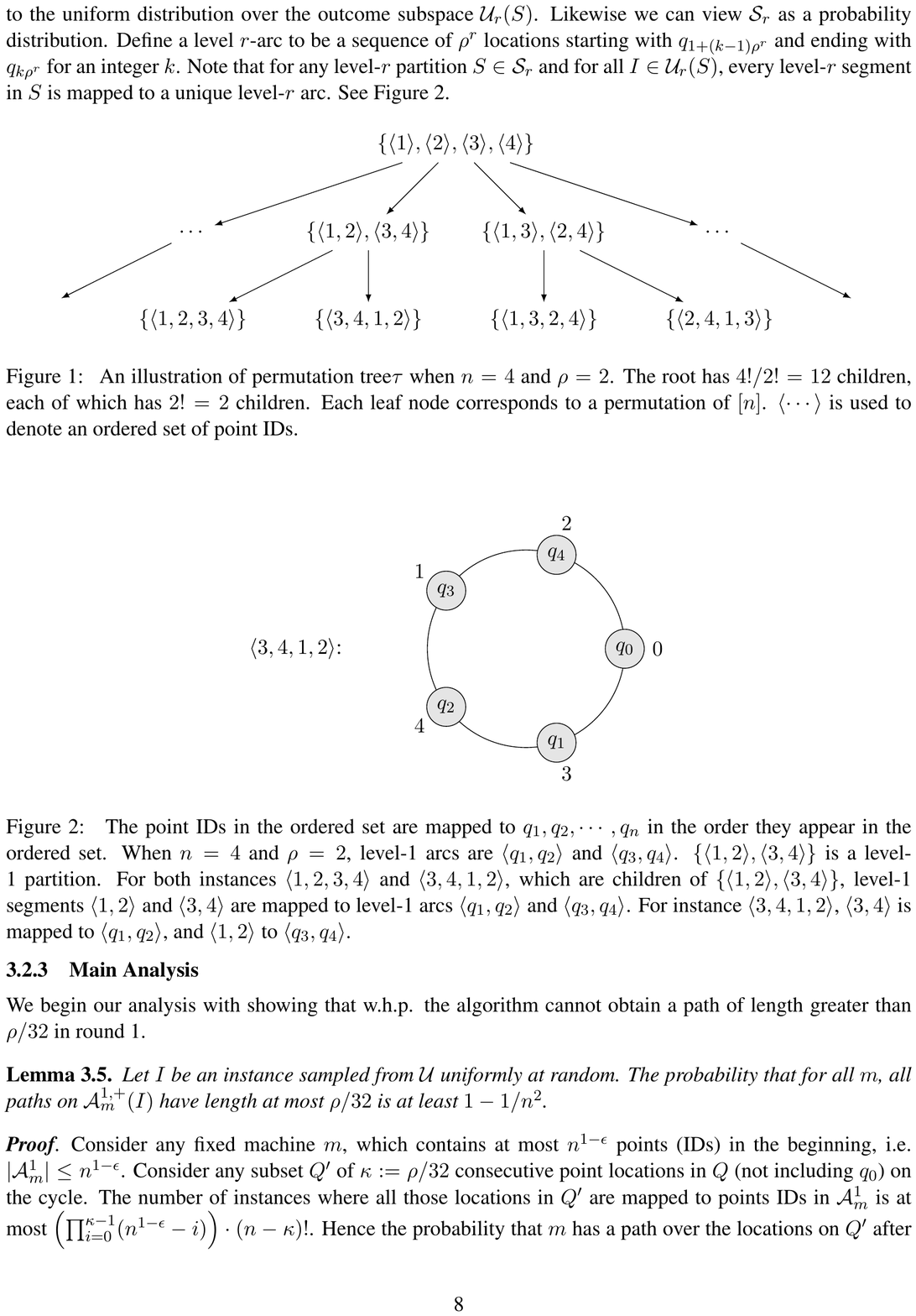}
\end{center}
\vspace{-1cm}
\caption{ \footnotesize{\textnormal{An illustration of permutation tree $\tau$ when $n = 4$ and $\rho =2$. The root has $4! / 2! = 12$ children, each of which has $2! = 2$ children. Each leaf node corresponds to a permutation of $[n]$. $\langle \cdots \rangle$ is used to denote an ordered set of point IDs.}}}
\label{fig:perm-tree}
\end{figure}
\iffalse%%%%%%%%%%%
\begin{tikzpicture}%
[
line/.style={>=latex,draw=black,fill=black, thin},
th/.style={>=latex,draw=black, thick},
rec1/.style={>=latex,draw=black,rounded corners, fill = blue!20, thin},
rec2/.style={>=latex,draw=black, rounded corners,fill = white!20, thin},
,scale = 0.8
]
\tikzstyle{vtx}=[circle,draw, fill = blue!20, inner sep=1pt];
\begin{scope}
\node (root) at (0, 0) {$\{ \lt{1}, \lt{2}, \lt{3}, \lt{4}\}$};
\node (a) at (-6, -2) {$ \cdots $}; 
\node (b) at (-2, -2) {$\{ \lt{1, 2}, \lt{3, 4}\}$};
\node (c) at (2, -2) {$\{ \lt{1, 3}, \lt{2, 4}\}$};
\node (d) at (6, -2) {$ \cdots $}; 
\draw[->,line] (root) -- (a) node[] {};
\draw[->,line] (root) -- (b) node[] {};
\draw[->,line] (root) -- (c) node[] {};
\draw[->,line] (root) -- (d) node[] {};
\node (b1) at (-6, -4) {$\{ \lt{1, 2, 3, 4}\}$};
\node (b2) at (-2, -4) {$\{ \lt{3, 4, 1, 2}\}$};
\node (c1) at (2, -4) {$\{ \lt{1, 3, 2, 4}\}$};
\node (c2) at (6, -4) {$\{ \lt{2, 4, 1, 3}\}$};
\draw[->,line] (a) -- (-9, -3.5) node[] {};
\draw[->,line] (b) -- (b1) node[] {};
\draw[->,line] (b) -- (b2) node[] {};
\draw[->,line] (c) -- (c1) node[] {};
\draw[->,line] (c) -- (c2) node[] {};
\draw[->,line] (d) -- (9, -3.5) node[] {};
\end{scope}
\end{tikzpicture}
\fi%%%%%%%%%%%

\begin{figure}[tp]
\begin{center}
\begin{tikzpicture}[scale = 0.5]
[
line/.style={>=latex,draw=black,fill=black, thin},
th/.style={>=latex,draw=black, thick},
rec1/.style={>=latex,draw=black,rounded corners, fill = blue!20, thin},
rec2/.style={>=latex,draw=black, rounded corners,fill = white!20, thin},
,scale = 0.8
]
\tikzstyle{vtx}=[circle,draw, fill = gray!20, inner sep=2pt];
\tikzstyle{id}=[circle,draw, fill = gray!20, inner sep=3pt];
\begin{scope}
\draw (0,0) circle (3cm);
\node (q0) at (3, 0) [vtx] {$q_0$};
\node (q1) at (0.927, -2.853) [vtx] {$q_1$};
\node (q2) at (-2.426, -1.765) [vtx] {$q_2$};
\node (q3) at (-2.426, 1.765) [vtx] {$q_3$};
\node (q4) at (0.927, 2.853) [vtx] {$q_4$};
\node (i0) at (4, 0) {$0$};
\node (i1) at (1.236, -3.804) {$3$};
\node (i2) at (-3.235, -2.351) {$4$};
\node (i3) at (-3.235, 2.351) {$1$};
\node (i4) at (1.236, 3.804) {$2$};
\node (l) at ( -7, 0) {$\lt{3,4, 1, 2}$:};
\end{scope}
\end{tikzpicture}
\vspace{-4mm}
\caption{ \footnotesize{\textnormal{The point IDs in the ordered set are mapped to $q_1, q_2, \cdots, q_n$ in the order they appear in the ordered set. When $n = 4$ and $\rho = 2$, level-1 arcs are $\lt{q_1, q_2}$ and $\lt{q_3, q_4}$. $\{ \lt{1, 2}, \lt{3,4}\}$ is a level-1 partition. For both instances $\lt{1,2, 3,4}$ and $\lt{3,4,1,2}$, which are children of $\{ \lt{1, 2}, \lt{3,4}\}$, level-1 segments $\lt{1, 2}$ and $\lt{3,4}$ are mapped to level-1 arcs $\lt{q_1, q_2}$ and $\lt{q_3, q_4}$. For instance $\lt{3,4,1,2}$, $\lt{3,4}$ is mapped to $\lt{q_1, q_2}$, and $\lt{1,2}$ to $\lt{q_3, q_4}$.}}}
%\vspace{-4mm}
\label{fig:arc}
\end{center}
\end{figure}
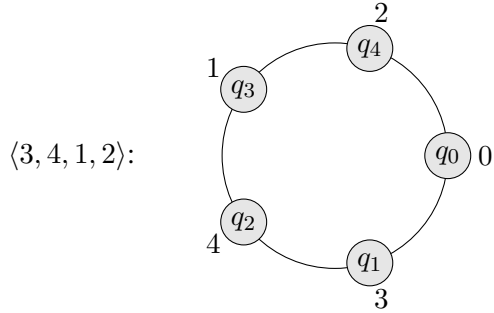

\vspace{.2cm}
\subsubsection{Main Analysis}

We begin our analysis with showing that w.h.p. the algorithm cannot obtain a path of length greater than $\rho / 32$ in round 1. 

\begin{lemma}
\label{lem:beginning}
Let $I$ be an instance sampled from $\cU$ uniformly at random. The probability that for all $m$, all paths on $\cA^{1,+}_m(I)$ have length at most $\rho / 32$ is at least $1 - 1/ n^2$. 
\end{lemma}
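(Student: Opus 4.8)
The plan is a one-shot first-moment (union bound) argument over the random placement, using that round~$1$ offers no communication before Reduce and that Reduce merges only intersecting paths.

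First I would freeze the (instance-independent) initial partition: let $S_m \subseteq \{0,1,\dots,n\}$ be the set of point IDs stored on machine $m$ at the start, so $|S_m|\le n^{1-\eps}$ by the memory bound, and observe that $S_m$ depends only on the IDs, not on the random bijection $f$ (the algorithm never sees the cycle structure). Since $\cA^{1}_m(I)$ is exactly the set of singleton paths $\{\,f(\mathrm{id})\;:\;\mathrm{id}\in S_m\,\}$, and in round~$1$ machine $m$ can only fuse singletons that are adjacent on the cycle, $\cA^{1,+}_m(I)$ contains a path of length $\ell:=\rho/32=32/\eps$ (an integer, since $1/\eps$ is) if and only if some window $W$ of $\ell$ consecutive locations on the cycle is occupied entirely by IDs from $S_m$. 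So it suffices to bound the probability that such a window exists.

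Second, I would bound, for a fixed $m$ and a fixed window $W$, the probability that $W$ is fully occupied from $S_m$. Revealing the IDs at the (at most $\ell$) non-pivot locations of $W$ one at a time and using $\frac{|S_m|-j}{n-j}\le\frac{|S_m|}{n}\le n^{-\eps}$ for every $j\ge 0$ (the left-hand side being $0$ once $j\ge |S_m|$), this probability is at most $n^{-\eps(\ell-1)}$, where the loss of one factor $n^{\eps}$ conservatively absorbs the single window that contains the fixed pivot $q_0$. A union bound over the $n+1$ windows and the $n^{1-\eps}$ machines then gives a failure probability of at most $(n+1)\,n^{1-\eps}\,n^{-\eps(\ell-1)}=(n+1)\,n^{\,1-\eps\ell}=(n+1)\,n^{-31}\le 1/n^2$, using $\eps\ell=32$. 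The slack is deliberately enormous — length roughly $4/\eps$ would already suffice for this round — and that headroom will be exploited by the inductive argument of Lemma~\ref{lem:main}.

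I do not expect a genuine obstacle here. The only points needing care are (i) the combinatorial reduction, namely that a long round-$1$ path forces a long run of consecutive cycle locations all stored on a single machine (this relies on there being no communication before Reduce and on merges combining only intersecting paths), and (ii) keeping the partition $\{S_m\}_m$ fixed before $f$ is drawn, so the randomness of $f$ is untouched. This lemma is merely the clean base case; the real difficulty — propagating such a bound through $\Omega(\log n)$ rounds despite the dependencies created when the algorithm re-places points based on what it has learned — is handled separately, via the Local Invariance Lemma (Lemma~\ref{lem:invariant}), in the proof of Lemma~\ref{lem:main}.
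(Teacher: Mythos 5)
Your proposal is correct and matches the paper's proof in essence: both reduce a long round-1 path to the event that some window of $\rho/32$ consecutive locations is mapped entirely to IDs initially held by one machine, bound that probability (you via sequential conditioning with factor $n^{-\eps}$ per location, the paper by counting instances, giving $(2/n^{\eps})^{\rho/32}$), and finish with a union bound over machines and windows. Your explicit handling of the pivot window (losing one factor $n^{\eps}$) is a minor cosmetic refinement of the paper's choice to consider only windows avoiding $q_0$; otherwise the arguments are the same.
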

\begin{proof}
Consider any fixed machine $m$, which contains at most $n^{1- \eps}$ points (IDs) in the beginning, i.e. $|\cA^1_m| \leq n^{1-\eps}$. Consider any subset $Q'$ of $\kappa :=  \rho / 32$ consecutive point locations in $Q$ (not including $q_0$) on the cycle. The number of instances where all those locations in $Q'$ are mapped to points IDs in $\cA^1_m$ is at most $\left( \prod_{i = 0}^{\kappa - 1} (n^{1 - \eps} - i) \right) \cdot (n - \kappa)!$. Hence the probability that $m$ has a path over the locations on $Q'$ after Reduce in round 1 is at most $\left( \prod_{i = 0}^{\kappa - 1} (n^{1 - \eps} - i) \right) \cdot (n - \kappa)! / n! \leq (2 / n^\eps)^\kappa \leq 1 / n^4$ for a sufficiently large $n$; recall $\rho = 1024 / \eps$. Taking a union bound over all machines $m$ and all $Q'$ gives the lemma. 
\end{proof}

In the following, we formalize how much the algorithm learns about segments over rounds. Roughly speaking, a partition $S \in \cS_r$ is `veiled' from the algorihtm if it induces a outcome subspace $\cU_r(S)$ where all outcomes contain the same set of segments that the algorithm barely learned by round $r$. 

\newcommand{\veiled}{\textsf{Veiled}}
\begin{definition}
\label{def:veiled}
We say that a level-$r$ partition $S \in \cS_r$ is veiled and denote it as $\veiled_r(S)$ if for all machines $m$ and segments $s \in S$, 
\begin{enumerate}
\setlength{\itemsep}{0pt}
\setlength{\parskip}{0pt}
\item for all $I \in \cU_r(S)$, $\cA^{r,+}_m(I) | s_r$ has no path of length more than $\rho^r / 8$, and $(\cA^{r,+}_m(I) \setminus ( \cA^{r,+}_m(I) | s_r))  \wedge s_r$ has no path of length more than $\rho^r / 32$; and 
\item for all $I_1, I_2 \in \cU_r(S)$, $\cA^{r,+}_m(I_1) | s_r = \cA^{r,+}_m(I_2) | s_r$. 
\end{enumerate} 
\noindent where $s_r$ denotes the $\rho^r / 32$-trimmed subpath of $s$. 
\end{definition}

\begin{definition}
	\label{def:a-veiled}
	For a level-$r$ partition $S \in \cS_r$, we say that $S$ is ascendant-veiled if all ascendant partitions of $S$ in the permutation tree $\tau$ are veiled. 
\end{definition}

The Local Invariant Lemma (\ref{lem:invariant}) makes Definition~\ref{def:veiled} more applicable. 

\begin{lemma}
\label{corollary:veiled}
A level-$r$ partition $S \in \cS_r$ is veiled if 
	\begin{enumerate}
	\setlength{\itemsep}{0pt}
\setlength{\parskip}{0pt}
		\item $S$ is ascendant-veiled; and
		\item there exists an  instance $I \in \cU_r(S)$ such that for all machines $m$ and segments $s \in S$, $\cA^{r,+}_m(I) \wedge s_r$ has no path of length more than $ \rho^r / 32$. 
	\end{enumerate} 
\noindent where $s_r$ denotes the $\rho^r / 32$-trimmed subpath of $s$. 
\end{lemma}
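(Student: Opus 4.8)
Here is the plan.

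The plan is to deduce both conditions of Definition~\ref{def:veiled} one segment at a time by feeding pairs of instances into the Local Invariance Lemma (Lemma~\ref{lem:invariant}). Fix the witness $I\in\cU_r(S)$ from hypothesis~2, an arbitrary $I'\in\cU_r(S)$, and a segment $s\in S$. Since $I$ and $I'$ are both descendants of $S$ in $\tau$, the IDs of $s$ occur as a consecutive block, in the order prescribed by $s$, in both instances, so $s$ is a common path of $I$ and $I'$; we invoke Lemma~\ref{lem:invariant} with $I_1=I$, $I_2=I'$, $H=s$ (hence $H_r=s_r$). Granting its hypothesis, Claim~(i) is exactly condition~1 of Definition~\ref{def:veiled} for $I'$, and Claim~(ii) gives $\cA^{r,+}_m(I)\,|\,s_r=\cA^{r,+}_m(I')\,|\,s_r$ for every $m$. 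Letting $I'$ range over $\cU_r(S)$, Claim~(i) yields condition~1 for every instance (condition~1 for $I$ itself is immediate from hypothesis~2), and Claim~(ii) yields condition~2 by passing through the common reference $I$. So the whole lemma reduces to checking, for this fixed $I$ and $H=s$, the hypothesis of Lemma~\ref{lem:invariant}: for all $1\le r''\le R$ and all $m$, every path in $\cA^{r'',+}_m(I)\,|\,H_{r''}$ has length $\le\rho^{r''}/8$, and every path in $(\cA^{r'',+}_m(I)\setminus(\cA^{r'',+}_m(I)\,|\,H_{r''}))\wedge H_{r''}$ has length $\le\rho^{r''}/32$, where $H_{r''}$ is the $\rho^{r''}/32$-trimmed subpath of $s$.

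I would verify this by cases on $r''$. If $r''=r$, then $H_r=s_r$ and hypothesis~2 says $\cA^{r,+}_m(I)\wedge s_r$ contains no path longer than $\rho^r/32$, which is stronger than both required bounds. If $r''>r$, then trimming the length-$\rho^r$ path $s$ by $\rho^{r''}/32\ge\rho^{r+1}/32$ at each end deletes more than $\rho^r$ points (recall $\rho=1024/\eps$), so $H_{r''}$ is empty and the bounds are vacuous. The remaining case $1\le r''<r$ is the heart of the matter, and is where I expect the difficulty to lie.

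For $1\le r''<r$, let $S_{r''}\in\cS_{r''}$ be the level-$r''$ ancestor of $S$ in $\tau$; since $S$ is ascendant-veiled, $\veiled_{r''}(S_{r''})$ holds, and since $\cU_r(S)\subseteq\cU_{r''}(S_{r''})$ its conclusions apply to $I$. Write $s=s^{(1)}s^{(2)}\cdots s^{(k)}$ as the concatenation of its $k=\rho^{\,r-r''}$ constituent level-$r''$ segments, all members of $S_{r''}$; each has length $\rho^{r''}$, its $\rho^{r''}/32$-trimmed core $s^{(j)}_{r''}$ is the central stretch of length $\rho^{r''}-\rho^{r''}/16$, and the trimmings line up: $H_{r''}$ starts where $s^{(1)}_{r''}$ starts and ends where $s^{(k)}_{r''}$ ends. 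Now $\veiled_{r''}(S_{r''})$ tells us that any single disjoint path $P\in\cA^{r'',+}_m(I)$ satisfies $|P|\le\rho^{r''}/8$ if $P\subseteq s^{(j)}_{r''}$ and $|P\wedge s^{(j)}_{r''}|\le\rho^{r''}/32$ otherwise, for each $j$. From this: $P$ contains no whole core $s^{(j)}_{r''}$, hence (each core being the central $15/16$ of its tile) the contiguous $P$ meets at most two consecutive tiles; bounding the part of $P$ in each tile it meets by the overlap with that tile's core plus a $\rho^{r''}/32$ margin gives $|P|\le\rho^{r''}/8$ whenever $P\subseteq H_{r''}$; and if $P\not\subseteq H_{r''}$ then $P$ cannot contain all of $H_{r''}$ (that would force $P\supseteq s^{(2)}\supseteq s^{(2)}_{r''}$), so it protrudes at exactly one end of $H_{r''}$, whence $P\wedge H_{r''}\subseteq s^{(1)}_{r''}$ or $P\wedge H_{r''}\subseteq s^{(k)}_{r''}$ and thus $|P\wedge H_{r''}|\le\rho^{r''}/32$. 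These are precisely the two bounds needed, so Lemma~\ref{lem:invariant} applies and the proof is complete.

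The main obstacle is this last case: one must convert the tile-by-tile guarantees of $\veiled_{r''}$ on the short segments $s^{(j)}$ into guarantees about the long segment $s$ under the much gentler trimming $H_{r''}$, and the bookkeeping is tight --- e.g.\ the bound $|P|\le\rho^{r''}/8$ for a $P$ straddling a tile boundary holds with no slack --- which is exactly why the $\rho^{r''}/32$ trimmings at the two ends of $s$ must be chosen to coincide with the trimmings of $s^{(1)}$ and $s^{(k)}$.
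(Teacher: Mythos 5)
Your proof is correct and follows essentially the same route as the paper: invoke the Local Invariance Lemma with the witness instance as $I_1$, an arbitrary $I'\in\cU_r(S)$ as $I_2$, and $H=s$, verify the hypothesis at round $r$ directly from hypothesis~2, and for earlier rounds decompose $s$ into its level-$r''$ constituent segments and use ascendant-veiledness, bounding a path by its overlaps with at most two consecutive trimmed cores plus the $\rho^{r''}/32$ margins. The only cosmetic differences are that you dispose of rounds $r''>r$ by noting $H_{r''}$ is empty (the paper instead observes only rounds up to $r$ are needed), and your handling of the boundary case $P\not\subseteq H_{r''}$ spells out details the paper states more tersely.
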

\begin{proof}
 	We apply Lemma~\ref{lem:invariant} with $I_1 = I$ and $H = s$ for any level-$r$ segment $s$ in $S$, and with any $I_2 \in \cU_r(S)$. Note that both instances $I_1$ and $I_2$ have $s$ as a subpath. 	To show the lemma, we only need to show that for all $1 \leq r' \leq r$ and machines $m$, all paths in $\cA^{r,+}_m(I_1) | s_r$ are of length at most $\rho^r / 8$, and all paths in $(\cA^{r,+}_m(I_1) \setminus ( \cA^{r,+}_m(I_1) | s_r))  \wedge s_r$ are of length at most $\rho^r / 32$. This is immediate when $r' = r$ from the second condition stated in the lemma. 
 	
 	Next, consider any $r' < r$. Let $S'$ denote the unique level-$r'$ partition that is an ascendant of $S$ in $\tau$. Since $S$ is ascendant-veiled, by definition $S' \in \cS_{r'}$ is veiled. Note that the segment $s \in S$ is obtained by stitching $\rho^{r - r'}$ segments in $S'$ -- we denote the first and last as $a'$ and $b'$, respectively. Let $a'_{r'}$ and $b'_{r'}$ denote the $\rho^{r'}/32$-trimmed subpaths of $a'$ and $b'$, respectively. Note that $s_{r'}$ starts with $a'_{r'}$ and ends with $b'_{r'}$. First consider a path $P^* \in (\cA^{r,+}_m(I_1) \setminus ( \cA^{r,+}_m(I_1) | s_r))  \wedge s_r$. Observe that $P^*$ properly intersects $a'_{r'}$ (or $b'_{r'}$, resp.), but is not a subpath of $a'_{r'}$ (or $b'_{r'}$, resp.).  Since $S'$ is veiled, we know $P^*$ intersects $a'_{r'}$ (or $b'_{r'}$, resp.) by at most $\rho^{r'} / 32$, hence the second part of the claim holds. Finally, consider $P^* \in \cA^{r,+}_m(I_1) | s_r$. In this case, $s_r$ fully contains $P^*$. Let $S''$ denote the $\rho^{r'} / 32$-trimmed subpaths of segments in $S'$. If $P^*$ is fully contained in a trimmed subpath in $S''$, then we know that $P^*$ has length at most $\rho^{r'} / 8$ since $S'$ is veiled, so we are done. Otherwise, $P^*$ can properly intersect at most two trimmed subpaths in $S'$, each by at most $\rho^{r'} / 32$ where the two trimmed subpaths have $2* \rho^{r'} / 32$ points between them. Hence, $P^*$ can have length at most $\rho^{r'} / 8$. 
\end{proof}

By Lemma~\ref{lem:beginning} and Lemma~\ref{lem:invariant}, we show that most level-1 partitions are veiled. 
\begin{lemma}
\label{lem:veiled-beginning}
$\Pr_{S \sim \cS_1} [ \veiled_1(S)] \geq 1 - 1 / n^2$. 
\end{lemma}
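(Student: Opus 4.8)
The plan is to show that a uniformly random level-1 partition $S \sim \cS_1$ is veiled with probability at least $1 - 1/n^2$ by reducing to the two conditions of Lemma~\ref{corollary:veiled} (the convenient sufficient condition for being veiled). First note that for $r = 1$ the \emph{ascendant-veiled} condition is vacuous: the only ascendant of a level-1 partition is the root, and there is nothing to check. So it suffices to bound the probability that $S$ fails the second condition of Lemma~\ref{corollary:veiled}, i.e., that there exists \emph{no} instance $I \in \cU_1(S)$ for which, for all machines $m$ and all level-1 segments $s \in S$, the path collection $\cA^{1,+}_m(I) \wedge s_1$ contains no path longer than $\rho/32$, where $s_1$ is the $\rho/32$-trimmed subpath of $s$.

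The key observation is that this condition is implied by the stronger event in Lemma~\ref{lem:beginning}: if $I$ is such that for all $m$, every path in $\cA^{1,+}_m(I)$ has length at most $\rho/32$, then certainly $\cA^{1,+}_m(I) \wedge s_1$ has no path of length exceeding $\rho/32$ for any segment $s$. Hence the second condition of Lemma~\ref{corollary:veiled} holds for $S$ as soon as \emph{at least one} instance $I \in \cU_1(S)$ has all its paths of length $\le \rho/32$ after Reduce in round~1. So the plan is: sample $S \sim \cS_1$ uniformly, then sample $I \sim \cU_1(S)$ uniformly; the composition is exactly a uniform draw from $\cU$. By Lemma~\ref{lem:beginning}, $\Pr_{I \sim \cU}[\text{all paths in all } \cA^{1,+}_m(I) \text{ have length} \le \rho/32] \ge 1 - 1/n^2$. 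Writing this probability as $\ex_{S \sim \cS_1}\big[\Pr_{I \sim \cU_1(S)}[\cdots]\big]$, and since each inner probability is at most $1$, the fraction of partitions $S$ for which the inner probability is $0$ — i.e., for which \emph{no} good instance exists — is at most $1/n^2$. For every other $S$, there is at least one good $I \in \cU_1(S)$, so the second condition of Lemma~\ref{corollary:veiled} holds, and since the first (ascendant-veiled) condition is vacuous, $S$ is veiled. This gives $\Pr_{S \sim \cS_1}[\veiled_1(S)] \ge 1 - 1/n^2$.

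I do not anticipate a serious obstacle here; this lemma is essentially a repackaging of Lemma~\ref{lem:beginning} through the tree structure and through the sufficient condition Lemma~\ref{corollary:veiled}. The only point that needs a line of care is verifying that a uniform draw of $S$ followed by a uniform draw of $I \in \cU_1(S)$ indeed yields the uniform distribution on $\cU$ — this follows because every leaf (permutation) has a unique level-1 ancestor and all level-1 partitions have the same number $\rho!^{\,n/\rho}$ of descendant leaves, as set up in the construction of $\tau$. One should also state explicitly that ``veiled'' for $S$ requires, via Lemma~\ref{corollary:veiled}, only the existence of one good instance (not that all instances in $\cU_1(S)$ are good); the Local Invariance Lemma then propagates the good behavior to all of $\cU_1(S)$. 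With that remark in place the proof is a two-sentence averaging argument.
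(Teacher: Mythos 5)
Your argument is correct and is essentially the paper's own proof: the paper likewise marks those $S \in \cS_1$ admitting one instance $I \in \cU_1(S)$ in which all paths after round-1 Reduce have length at most $\rho/32$, uses the fact that $\{\cU_1(S)\}_{S \in \cS_1}$ partitions $\cU$ into equal-size classes to transfer the $1-1/n^2$ bound of Lemma~\ref{lem:beginning} from instances to partitions, and then applies Lemma~\ref{corollary:veiled}, with the ascendant-veiled condition holding vacuously for level-1 partitions. The only slip is cosmetic: each level-1 partition has $(n/\rho)!$ descendant leaves (e.g.\ $2$ when $n=4$, $\rho=2$), not $\rho!^{\,n/\rho}$, but your argument uses only that this count is the same for every $S$, which is true.
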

\begin{proof}
Mark $S \in \cS_1$ if there is an instance $I \in \cU_1(S)$ such that for all $m$, all paths on $\cA^{1,+}_m(I)$ have length at most $ \rho  /32$. Note that $\{\cU_1(S)\}_{S \in \cS_1}$ partitions the outcome space $\cU$. By Lemma~\ref{lem:beginning}, a level-1 partition sampled from $\cS_1$ is marked with probability at least $1 - 1/ n^2$. We observe that a marked level-1 partition $S$ satisfies the condition stated in Lemma~\ref{corollary:veiled}; $S$ is a level-1 partition, so it is ascendant-veiled by Definition~\ref{def:a-veiled}.
\end{proof}

Due to space constraints, the proof of the following lemma can be found in the appendix

\begin{lemma}
\label{lem:no-long-from-veiled}
Consider any $r$ and veiled $S \in \cS_{r-1}$. Let $I$ be an instance sampled from $\cU_{r-1}(S)$. Then, the probability that for all machines $m$, all paths in $\cA^{r,+}_m(I)$ have length at most $ \rho^r /32$ is at least $1 - 1/ n^2$. 
\end{lemma}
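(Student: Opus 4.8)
The plan is to fix a single machine $m$, bound the probability over $I\sim\cU_{r-1}(S)$ that $\cA^{r,+}_m(I)$ contains a path of length exceeding $\rho^r/32$, and then union bound over the at most $n^{1-\eps}$ machines; we may assume $r$ lies in the relevant range $2\le r\le R$, since $r=1$ (with $S$ the root of $\tau$) is exactly Lemma~\ref{lem:beginning}. The first and key step is to \emph{freeze the interior of each segment}. Because $S$ is veiled, for every level-$(r-1)$ segment $s\in S$ and any two $I_1,I_2\in\cU_{r-1}(S)$ the hypotheses of Lemma~\ref{lem:invariant} hold with $H=s$, so the ``before Reduce in round $r$'' consequences appearing in its proof (Propositions~\ref{prop:before-reduce-1} and~\ref{prop:before-reduce-2}) apply: $\cA^{r}_m(I_1) | s_{r-1}=\cA^{r}_m(I_2) | s_{r-1}$; every path in this common collection has length $\le\rho^{r-1}/8$; and any path of $\cA^r_m(I)$ that is not a subpath of $s_{r-1}$ meets $s_{r-1}$ in a single contiguous piece of length $\le\rho^{r-1}/32$ lying at one of the two ends of $s_{r-1}$ (it is an arc strictly escaping $s_{r-1}$). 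I would write $B_m(s):=\cA^r_m(I) | s_{r-1}$ for this collection (it does not depend on the choice of $I\in\cU_{r-1}(S)$, by the above) and call $s$ \emph{$m$-rich} if the paths of $B_m(s)$ cover some contiguous sub-arc of $s_{r-1}$ of length $\ge\rho^{r-1}/2$; richness is again a property of $(m,s)$ only, independent of the instance.

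Second, I would show that a long path forces a long run of rich segments: if $\cA^{r,+}_m(I)$ contains a path $P$ with $|P|>\rho^r/32$, then there are $k:=\rho/32-1$ \emph{consecutive} level-$(r-1)$ arcs each of which hosts (under $I$) an $m$-rich segment. Indeed $P$, a contiguous arc, fully contains that many consecutive level-$(r-1)$ arcs; for any such arc $A_i$ with host segment $s_i$, the paths of $\cA^r_m(I)$ that $m$ merged to build $P$ together cover all of $A_i$, hence all of $s_{i,r-1}$, and -- by the first step -- those of them that are not subpaths of $s_{i,r-1}$ can only touch $s_{i,r-1}$ within $\rho^{r-1}/32$ of its two ends; so the $B_m(s_i)$-paths alone cover the middle $\ge|s_{i,r-1}|-2\rho^{r-1}/32\ge\rho^{r-1}/2$ locations of $s_{i,r-1}$, and contiguously, so $s_i$ is $m$-rich.

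Third comes the probabilistic estimate. Put $N:=n/\rho^{r-1}$, the number of level-$(r-1)$ arcs; then $\cU_{r-1}(S)$ is the uniform distribution over bijections between the $N$ segments of $S$ and the $N$ arcs. The collections $\{B_m(s)\}_{s\in S}$ are pairwise disjoint subsets of $\cA^r_m(I)$ (distinct trimmed segments have disjoint point sets), and each $m$-rich $s$ has $|B_m(s)|\ge 4$ (covering a length $\ge\rho^{r-1}/2$ needs $\ge 4$ paths of length $\le\rho^{r-1}/8$), so at most $n^{1-\eps}/4<N$ segments are $m$-rich, where I use $N\ge n^{1-\eps/2}$ (valid since $r-1\le R$ and $\rho^R=n^{\eps/2}$). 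Hence, for a fixed run of $k$ consecutive arcs, the probability that all $k$ host $m$-rich segments is at most $(n^{1-\eps}/N)^{k}\le n^{-\eps k/2}$; summing over the $N\le n$ runs and then over the $\le n$ machines bounds the probability of any path longer than $\rho^r/32$ by $n^{2-\eps k/2}$, and with $k=32/\eps-1$ the exponent $2-(16-\eps/2)$ is at most $-13$, which is comfortably below $-2$, giving the claimed $1/n^2$.

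The main obstacle is the first step: extracting from ``$S$ is veiled'' that the portion of the algorithm's state sitting inside the trimmed segments, together with the routing of each such path into round $r$, is \emph{literally identical} across all instances of $\cU_{r-1}(S)$. This is precisely what decouples ``which short paths $m$ already holds on each segment'' from ``which segment is placed on which arc'', so that only the latter -- fresh, uniform -- randomness needs to be fought in the third step. The geometric claim of the second step (escaping paths help only near the two ends of a trimmed segment, forcing the frozen interior paths to carry the bulk of any long path's coverage) is where the exponentially growing trim $\rho^r/32$ is essential; minor boundary cases -- a path almost as long as the whole cycle, or the arcs at the very ends of $P$ -- are easily absorbed, the former because such a path would already resolve the connectivity question.
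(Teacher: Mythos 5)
Your proposal is correct and takes essentially the same route as the paper's own proof: veiledness of $S$ freezes the per-segment collections $\cA^{r}_m \mid s_{r-1}$ (and their length bounds) across all instances in $\cU_{r-1}(S)$, a path longer than $\rho^r/32$ forces a run of consecutive level-$(r-1)$ arcs whose host segments carry instance-independent interior paths, of which a machine can hold at most $n^{1-\eps}$, and the uniform segment-to-arc bijection plus a union bound over runs and machines yields the $1-1/n^2$ guarantee. The differences are cosmetic -- your ``rich'' condition and run length $\rho/32-1$ versus the paper's weaker ``marked'' (center-path) condition over $\rho-2$ arcs, and your frozen-state facts are better cited directly from Definition~\ref{def:veiled} (items 1 and 2 at level $r-1$, plus paths being forwarded as-is) rather than via the internal propositions of Lemma~\ref{lem:invariant}, whose stated hypotheses you have not verified for all rounds.
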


\begin{lemma}
\label{lem:veiled}
	For any $2 \leq r \leq R$, veiled and ascendant-veiled partition $S \in \cS_{r-1}$, we have
$\Pr_{S' \sim \cC(S)} [ \veiled_r(S') ] \geq 1 - 1/ n^2$. 
\end{lemma}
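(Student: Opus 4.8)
The plan is to reduce the statement to Lemma~\ref{corollary:veiled}, which says a level-$r$ partition is veiled provided it is ascendant-veiled and there exists one instance in its outcome subspace on which the round-$r$ paths intersect the trimmed segments only in short pieces. So I would check two things about a child $S' \in \cC(S)$: (1) $S'$ is ascendant-veiled, and (2) $\cU_r(S')$ contains such a ``good'' instance. I expect (1) to hold for \emph{every} child deterministically, and (2) to fail for at most a $1/n^2$ fraction of children, by a marking argument built on Lemma~\ref{lem:no-long-from-veiled}.

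For (1): for any $S' \in \cC(S)$, the ascendant partitions of $S'$ in $\tau$ are exactly $S$ together with the ascendant partitions of $S$. Since $S$ is veiled by hypothesis and $S$ is ascendant-veiled (so all of its ascendants are veiled), all ascendants of $S'$ are veiled; hence $S'$ is ascendant-veiled. No randomness is used here. For (2), I would use that $\{\cU_r(S')\}_{S' \in \cC(S)}$ partitions $\cU_{r-1}(S)$, so drawing $I \sim \cU_{r-1}(S)$ amounts to first drawing $S' \sim \cC(S)$ and then $I \sim \cU_r(S')$. Call $I \in \cU_{r-1}(S)$ \emph{good} if for every machine $m$ all paths in $\cA^{r,+}_m(I)$ have length at most $\rho^r/32$, and mark $S'$ if $\cU_r(S')$ contains a good instance. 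Lemma~\ref{lem:no-long-from-veiled} (applicable since $S$ is veiled) gives $\Pr_{I \sim \cU_{r-1}(S)}[I \text{ good}] \ge 1 - 1/n^2$, and therefore
\[
\Pr_{S' \sim \cC(S)}[S' \text{ marked}] \ \ge\ \Pr_{S' \sim \cC(S),\, I \sim \cU_r(S')}[I \text{ good}] \ =\ \Pr_{I \sim \cU_{r-1}(S)}[I \text{ good}] \ \ge\ 1 - \tfrac{1}{n^2}.
\]

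It remains to note that a marked $S'$ is in fact veiled: it is ascendant-veiled by (1), and taking the good instance $I \in \cU_r(S')$, every element of $\cA^{r,+}_m(I) \wedge s_r$ has the form $P' \wedge s_r$ with $P' \in \cA^{r,+}_m(I)$, hence length at most $|P'| \le \rho^r/32$; so both hypotheses of Lemma~\ref{corollary:veiled} hold and $\veiled_r(S')$ follows. Combining, $\Pr_{S' \sim \cC(S)}[\veiled_r(S')] \ge \Pr_{S' \sim \cC(S)}[S' \text{ marked}] \ge 1 - 1/n^2$.

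The bulk of the real work has already been done in Lemmas~\ref{corollary:veiled} and~\ref{lem:no-long-from-veiled}, so the ``hard part'' here is mild: it is just making sure the probability decomposition over the permutation tree is used correctly, i.e.\ transferring the per-instance guarantee of Lemma~\ref{lem:no-long-from-veiled} into a statement about which children $S' \in \cC(S)$ admit at least one good instance — which is exactly what the displayed inequality does via the partition $\cU_{r-1}(S) = \bigsqcup_{S'} \cU_r(S')$ — and carefully matching the quantities $\rho^r/8$ and $\rho^r/32$ against Definition~\ref{def:veiled} through the reformulation in Lemma~\ref{corollary:veiled}.
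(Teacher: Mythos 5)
Your proposal is correct and follows essentially the same route as the paper: mark a child $S' \in \cC(S)$ if $\cU_r(S')$ contains an instance whose round-$r$ paths all have length at most $\rho^r/32$, transfer the bound of Lemma~\ref{lem:no-long-from-veiled} through the partition $\cU_{r-1}(S) = \bigcup_{S'} \cU_r(S')$, and conclude via Lemma~\ref{corollary:veiled}. Your write-up merely makes explicit two steps the paper leaves implicit (that every child is ascendant-veiled because $S$ is veiled and ascendant-veiled, and the chain-rule decomposition of the sampling), so no substantive difference.
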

\begin{proof}
The proof is similar to that of Lemma~\ref{lem:veiled-beginning}. Mark $S' \in \cC(S)$ if there is an instance $I \in \cU_{r}(S')$ such that for all machines $m$, all paths in $\cA^{r,+}_m(I)$ have length at most $\rho^r / 32$. By Lemma~\ref{lem:no-long-from-veiled}, a level-$r$-partition sampled from $\cC(S)$ is marked with probability at least $1 - 1/ n^2$. We observe that a marked level-$r$ partition $S$ satisfies the condition stated in Lemma~\ref{corollary:veiled}. 
\end{proof}

Using Lemma~\ref{lem:veiled-beginning} and \ref{lem:veiled}, we decondition the probability. 
\begin{corollary}
For all $1 \leq r \leq R$, $\Pr_{S \sim \cS_r} [ \veiled_r(S) ] \geq 1 - 1/ n$. 
\end{corollary}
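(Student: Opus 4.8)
The corollary deconditions the veiled-property probabilities along every root-to-leaf path in the permutation tree $\tau$, so the natural approach is a union bound over the levels $1, 2, \ldots, r$ combined with the conditional estimates already established. I would set this up as a telescoping/chaining argument: write $\veiled_r(S)$ as the event that a leaf-level partition $S$ is veiled, and decompose the sampling of $S \sim \cS_r$ as a sequence of conditional draws $S_1 \sim \cS_1$, $S_2 \sim \cC(S_1)$, \ldots, $S_r \sim \cC(S_{r-1})$, exactly as described in the construction of $\tau$. The key point is that by Lemma~\ref{lem:veiled} the event $\veiled_r(S_r)$ is guaranteed (with the stated failure probability) \emph{provided} all ancestors $S_1, \ldots, S_{r-1}$ are veiled, i.e. provided $S_{r-1}$ is both veiled and ascendant-veiled.

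\textbf{Main steps.} First I would prove the statement by induction on $r$. The base case $r = 1$ is exactly Lemma~\ref{lem:veiled-beginning}, which gives $\Pr_{S \sim \cS_1}[\veiled_1(S)] \geq 1 - 1/n^2 \geq 1 - 1/n$. For the inductive step, assume $\Pr_{S \sim \cS_{r-1}}[\veiled_{r-1}(S) \text{ and } S \text{ is ascendant-veiled}] \geq 1 - (r-1)/n^2$ — note that I want to carry the \emph{conjunction} of veiled and ascendant-veiled through the induction, since that is precisely the hypothesis Lemma~\ref{lem:veiled} requires; observe that if every ancestor of $S$ at levels $1, \ldots, r-2$ is veiled (ascendant-veiled-ness of $S$ at level $r-1$) \emph{and} $S$ itself is veiled, then $S$ being ascendant-veiled at level $r$ together with $\veiled_r(S)$ follows once the child draw succeeds. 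Then condition on the good event at level $r-1$ and apply Lemma~\ref{lem:veiled}: for each such $S$, $\Pr_{S' \sim \cC(S)}[\veiled_r(S')] \geq 1 - 1/n^2$, and on this event $S'$ is also ascendant-veiled (all its proper ancestors, namely $S$ and $S$'s ancestors, are veiled). A union bound over the level-$(r-1)$ failure and the level-$r$ conditional failure gives a total failure probability of at most $(r-1)/n^2 + 1/n^2 = r/n^2$. Since $r \leq R = (\eps/2)\log_\rho n \leq n$ for large $n$, we get $r/n^2 \leq 1/n$, which yields the claimed bound $\Pr_{S \sim \cS_r}[\veiled_r(S)] \geq 1 - 1/n$.

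\textbf{Expected obstacle.} The only delicate bookkeeping is making sure the induction hypothesis is stated strongly enough: one must track ascendant-veiledness alongside veiledness, because Lemma~\ref{lem:veiled} is only applicable when $S \in \cS_{r-1}$ is simultaneously veiled and ascendant-veiled. If one tried to induct on ``$\veiled_r$'' alone, the chain would break, because Lemma~\ref{lem:veiled}'s hypothesis would not be available. Once this is set up correctly, the rest is a routine union bound; the factor $1/n^2$ in the per-level estimates is deliberately chosen with enough slack ($R = O(\log n) \ll n$) that summing over all $R$ levels still leaves us comfortably below $1/n$. One should also note in passing that $\{\cU_r(S)\}_{S \in \cS_r}$ partitions the outcome space, so the probabilities $\Pr_{S \sim \cS_r}[\cdot]$ are genuinely the marginal probabilities obtained by the sequential sampling, justifying the conditioning used throughout.
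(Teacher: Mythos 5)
Your proposal is correct and is essentially the argument the paper intends: the paper only remarks that Lemmas~\ref{lem:veiled-beginning} and~\ref{lem:veiled} are "deconditioned," which is precisely your induction along the sequential draws in $\tau$, carrying the conjunction of veiledness and ascendant-veiledness so that Lemma~\ref{lem:veiled} stays applicable, and then union-bounding the per-level failures to get $R/n^2 \leq 1/n$. Nothing further is needed.
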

Note that if $S \in \cS_r$ is veiled, then for all instances $I \in \cU_r(S)$, the algorithm has no path longer than $2\rho^r <  n / 2$ on any machines since the algorithm has no path
that fully contains a level-$r$ segment, and $R = (\eps / 2) \log_\rho p$. This proves Lemma~\ref{lem:main} and Theorem~\ref{thm:main}

\section*{Acknolwedgements}

The authors thank Paul Beame for kindly explaining the lower bound in \cite{BeameKS13} in detail.

%%%%%%%%%%%%%%%%%%%%%%%%%%%%%%%%%
\bibliographystyle{plain}
\bibliography{mrbib}

\begin{thebibliography}{10}

\bibitem{AndoniNOY14}
Alexandr Andoni, Aleksandar Nikolov, Krzysztof Onak, and Grigory Yaroslavtsev.
\newblock Parallel algorithms for geometric graph problems.
\newblock In {\em STOC}, 2014.

\bibitem{BahmaniKV12}
Bahman Bahmani, Ravi Kumar, and Sergei Vassilvitskii.
\newblock Densest subgraph in streaming and mapreduce.
\newblock {\em PVLDB}, 5(5):454--465, 2012.

\bibitem{BahmaniMVKV12}
Bahman Bahmani, Benjamin Moseley, Andrea Vattani, Ravi Kumar, and Sergei
  Vassilvitskii.
\newblock Scalable k-means++.
\newblock {\em PVLDB}, 5(7):622--633, 2012.

\bibitem{BeameKS13}
Paul Beame, Paraschos Koutris, and Dan Suciu.
\newblock Communication steps for parallel query processing.
\newblock In {\em Proceedings of the 32Nd Symposium on Principles of Database
  Systems}, PODS '13, pages 273--284, New York, NY, USA, 2013. ACM.

\bibitem{BeameKS13arxiv}
Paul Beame, Paraschos Koutris, and Dan Suciu.
\newblock Communication steps for parallel query processing.
\newblock {\em CoRR}, abs/1306.5972, 2013.

\bibitem{BroderPJVV14}
Andrei~Z. Broder, Lluis~Garcia Pueyo, Vanja Josifovski, Sergei Vassilvitskii,
  and Srihari Venkatesan.
\newblock Scalable k-means by ranked retrieval.
\newblock In {\em WSDM}, pages 233--242, 2014.

\bibitem{ChierichettiDK14}
Flavio Chierichetti, Nilesh~N. Dalvi, and Ravi Kumar.
\newblock Correlation clustering in mapreduce.
\newblock In {\em The 20th {ACM} {SIGKDD} International Conference on Knowledge
  Discovery and Data Mining, {KDD} '14, New York, NY, {USA} - August 24 - 27,
  2014}, pages 641--650, 2014.

\bibitem{ChierichettiKT10}
Flavio Chierichetti, Ravi Kumar, and Andrew Tomkins.
\newblock Max-cover in map-reduce.
\newblock In {\em WWW}, pages 231--240, 2010.

\bibitem{CullerKPSSSSE93}
David~E. Culler, Richard~M. Karp, David~A. Patterson, Abhijit Sahay, Klaus~E.
  Schauser, Eunice~E. Santos, Ramesh Subramonian, and Thorsten von Eicken.
\newblock Logp: Towards a realistic model of parallel computation.
\newblock In {\em PPOPP}, pages 1--12, 1993.

\bibitem{EneIM11}
Alina Ene, Sungjin Im, and Benjamin Moseley.
\newblock Fast clustering using mapreduce.
\newblock In {\em KDD}, pages 681--689, 2011.

\bibitem{FeldmanMSSS08}
Jon Feldman, S.~Muthukrishnan, Anastasios Sidiropoulos, Clifford Stein, and
  Zoya Svitkina.
\newblock On distributing symmetric streaming computations.
\newblock In {\em SODA}, pages 710--719, 2008.

\bibitem{fish2014computational}
Benjamin Fish, Jeremy Kun, {\'A}d{\'a}m~D{\'a}niel Lelkes, Lev Reyzin, and
  Gy{\"o}rgy Tur{\'a}n.
\newblock On the computational complexity of mapreduce.
\newblock {\em arXiv preprint arXiv:1410.0245}, 2014.

\bibitem{GoelM12}
A.~Goel and K.~Munagala.
\newblock Complexity measures for map-reduce, and comparison to parallel
  computing.
\newblock {\em Available at {\tt
  http://www.cs.duke.edu/\~{}kamesh/mapreduce.pdf}}, 2012.

\bibitem{GoodrichSZ11}
Michael~T. Goodrich, Nodari Sitchinava, and Qin Zhang.
\newblock Sorting, searching, and simulation in the mapreduce framework.
\newblock {\em CoRR}, abs/1101.1902, 2011.

\bibitem{jacob2014complexity}
Riko Jacob, Tobias Lieber, and Nodari Sitchinava.
\newblock On the complexity of list ranking in the parallel external memory
  model.
\newblock In {\em Mathematical Foundations of Computer Science 2014}, pages
  384--395. Springer, 2014.

\bibitem{KarloffSV10}
Howard~J. Karloff, Siddharth Suri, and Sergei Vassilvitskii.
\newblock A model of computation for mapreduce.
\newblock In {\em SODA}, pages 938--948, 2010.

\bibitem{KumarMVV13}
Ravi Kumar, Benjamin Moseley, Sergei Vassilvitskii, and Andrea Vattani.
\newblock Fast greedy algorithms in mapreduce and streaming.
\newblock In {\em SPAA}, pages 1--10, 2013.

\bibitem{LattanziMSV11}
Silvio Lattanzi, Benjamin Moseley, Siddharth Suri, and Sergei Vassilvitskii.
\newblock Filtering: a method for solving graph problems in mapreduce.
\newblock In {\em SPAA}, pages 85--94, 2011.

\bibitem{LeskovecKF05}
Jure Leskovec, Jon~M. Kleinberg, and Christos Faloutsos.
\newblock Graphs over time: densification laws, shrinking diameters and
  possible explanations.
\newblock In {\em KDD}, pages 177--187, 2005.

\bibitem{MirzasoleimanKSK13}
Baharan Mirzasoleiman, Amin Karbasi, Rik Sarkar, and Andreas Krause.
\newblock Distributed submodular maximization: Identifying representative
  elements in massive data.
\newblock In {\em NIPS}, pages 2049--2057, 2013.

\bibitem{AfratiSSU2013}
Anish~Das Sarma, Foto~N. Afrati, Semih Salihoglu, and Jeffrey~D. Ullman.
\newblock Upper and lower bounds on the cost of a map-reduce computation.
\newblock In {\em Proceedings of the 39th international conference on Very
  Large Data Bases}, PVLDB'13, pages 277--288. VLDB Endowment, 2013.

\bibitem{SuriV11}
Siddharth Suri and Sergei Vassilvitskii.
\newblock Counting triangles and the curse of the last reducer.
\newblock In {\em WWW}, pages 607--614, 2011.

\bibitem{Valiant90}
Leslie~G. Valiant.
\newblock A bridging model for parallel computation.
\newblock {\em Commun. ACM}, 33(8):103--111, 1990.

\end{thebibliography}

\appendix
\section{Omitted Proof}

\begin{proofof}[Lemma~\ref{lem:no-long-from-veiled}]
Fix an $r$, veiled $S \in \cS_{r-1}$, and machine $m$. Since $S$ is veiled we know that for any level-$(r-1)$ segment $s \in S$, for all instances in $\cU_{r-1}(S)$, all paths in $\cA^{r}_m$ overlaps with $s_{r-1}$ by at most $\rho^{r-1} / 8$; paths in $\cA^{r}_m$ only come from $\cA^{r-1,+}_{m'}$. Also note that $\cA^r_m | s_{r-1}$ is the same for all instances in $\cU_{r-1}(S)$. 

Say that the algorithm $\cA$ obtains a path of length more than $\rho^r / 32$ for instance $I$ that spans over a consecutive set $Q'$ of $\rho$ level-$(r -1)$ arcs of length $\rho^{r-1} / 32$. For notational simplicity, say $Q' = \{q_1, q_2, ..., q_\rho\}$. Say the level-$(r-1)$ segments of $I$ (or equivalently $S$), $t_1, t_2, ..., t_\rho$ are mapped to $q_1, q_2, ..., q_\rho$, respectively. 
Consider any such level-$(r-1)$ segment say $t_2$, except $t_1$ and $t_\rho$. We say that a path is a center path w.r.t. $t_2$ if the path is a subpath of $t_2$'s $\rho^{r-1} / 32$-trimmed path. Then, $\cA_{m}^r(I)$ must have a center path $P_2$ w.r.t. $t_2$. This is because without a center path, the arc $q_2$ only can be covered by at most $\rho^{r-1} / 32 +  \rho^{r-1} / 32$ on each end (the arc $q_2$ hosts $t_2$, and a non-center path w.r.t. $t_2$ can overlap the $\rho^{r-1} / 32$-trimmed subpath of $t_2$ by at most $\rho^{r-1} / 32$), but the arc has length $\rho^{r-1}$. Also we know that $\cA_{m}^r(I)$ has $P_2$ for all instances $I \in \cU_{r-1}(S)$ since $S$ is veiled and paths before Reduce in round $r$ come from those after Reduce in round $r-1$. Hence we have found center paths $P_2, ..., P_{\rho-1}$ that are in $\cA^r_m(I)$ for all instances $I \in \cU_{r-1}(S)$ that must be mapped to the level-$(r-1)$ arcs $q_2, q_3, ..., q_{\rho-1}$, respectively. 

Mark each level-$(r-1)$ segment $s \in S$ if $\cA^r_m(I)$ contains a center path w.r.t. $s$ for any $I \in \cU_{r-1}(S)$. Note that there are at most $n^{1-\eps}$ marked segments in $S$ since machine $m$ can hold at most $n^{1- \eps}$ paths. The above discussion shows that $\cA$ can obtain a path of length more than $\rho^r / 32$ on machine $m$ over $Q'$ only when the middle $\rho-2$ arcs, $q_2, ..., q_\rho$ are mapped to the marked segments.  The number of instances $I$ in $\cU(S)$ where every level-$(r-1)$ arc in $Q' \setminus \{q_1, q_\rho\}$ is mapped to a marked level-$(r-1)$ segment in $S$ is at most $\left( \prod_{i = 0}^{\rho - 3} (n^{1 - \eps}- i) \right) \cdot (n / \rho^{r-1} - (\rho-2))!$. 
Here we used the fact that there are $n / \rho^{r-1}$ level-$(r-1)$ arcs and at most $n^{1-\eps}$ marked segments in $S$. 

Hence the probability that $m$ has a path spanning over all level-$(r-1)$ arcs in $Q'$ is at most 
\begin{small}
$$\left( \prod_{i = 0}^{\rho - 3} (n^{1 - \eps}- i) \right) \cdot (n / \rho^{r-1} - (\rho-2))! / (n / \rho^{r-1})! \leq (\frac{2\rho^{r-1}}{n^\eps})^{\rho-2} \leq 1/ n^4.$$
\end{small}

The last inequality follows since $\rho =  1024 / \eps$ and $r < R = (\eps / 2) \log_\rho n$. Taking a union bound over all machines $m$ and all $Q'$ gives the lemma. 
\end{proofof}

\end{document}